\renewcommand{\sigma}{J}
\renewcommand{\mu}{K}
\renewcommand{\lambda}{L}
\newcommand{\prob}{\mathbb{P}}
\newcommand{\Prob}{\mathbb{P}}
\newcommand{\diff}{\mathrm{d}}
\theoremstyle{plain}
\newtheorem{lemma}{Lemma}[section]
\newtheorem{proposition}{Proposition}[section]
\newtheorem{theorem}{Theorem}[section]
\theoremstyle{remark}
\newtheorem{remark}{Remark}[section]
\theoremstyle{definition}
\newtheorem{definition}{Definition}[section]
\newtheorem{example}{Example}[section]
\numberwithin{equation}{section}
\begin{document}

\title{Diversification and Stochastic Dominance:\\ When All Eggs Are Better Put in One Basket}
\author{Léonard Vincent\thanks{\texttt{Email address: leonard.vincent@protonmail.ch}}}
\maketitle


\begin{abstract}
\vspace{0.5em}
\noindent
Diversification is usually viewed as a reliable way to reduce risk, yet it can dramatically fail for heavy-tailed losses with infinite mean: pooling independent losses of this type may increase tail risk at every threshold. We study this reversal by comparing a diversified portfolio (a weighted average) of risks to a ``one-basket'' benchmark that concentrates the full exposure on a single component chosen at random according to the same weights. In the iid case, the benchmark reduces to a single risk, recovering the classical comparison between a single risk and a diversified portfolio. Our main result --- the \emph{one-basket theorem} --- provides new sufficient conditions under which the diversified portfolio has larger tail probabilities for all thresholds (first-order stochastic dominance) than this benchmark. The theorem enables weight-specific verification of the stochastic dominance relation and yields new applications, notably to averages of infinite-mean discrete Pareto risks. We further show that these failures of diversification are boundary cases of a general phenomenon: diversification always increases the likelihood of exceeding thresholds near zero, and under specific conditions this local effect extends to all thresholds, yielding first-order stochastic dominance.
\end{abstract}

\vspace{1em}
\noindent\textbf{Keywords:} diversification, risk pooling, stochastic order, infinite mean


\section{Introduction}
\label{sec:intro}
Over the past century, probabilistic modeling has become a cornerstone of risk management, offering a quantitative framework for assessing and mitigating risk. One of its significant contributions is its formal justification for diversification --- the practice of spreading exposure across multiple independent or weakly correlated risks to reduce overall variability. This justification rests on classical results such as the law of large numbers and Modern Portfolio Theory, which demonstrate that diversification can reduce volatility without sacrificing expected returns \citep{markowitz1952portfolio}. Taken together, these and related developments have come to be seen as confirming a longstanding belief in the benefits of diversification, a notion embedded in common sense and memorably captured by the proverb: ``Don't put all your eggs in one basket."

\subsection{Diversification as a source of risk}
Given both this theoretical and intuitive support, it may be surprising that in certain cases, diversification actually increases risk. A striking example of this phenomenon was recently established by \citet{chen2025unexpected}, who proved that diversification increases risk in the sense of first-order stochastic dominance when dealing with infinite-mean Pareto risks. More precisely, they showed that for $n \geq 2$ independent and identically distributed (iid) copies $X_1, \dots , X_n$ of a random variable $X$ following a Pareto distribution with shape parameter $\alpha \in (0,1]$, every weighted average of these variables is larger in first-order stochastic dominance than $X$ itself:
\begin{equation}
\label{eq:unexpected}
X \leq_{\text{st}} \theta_1 X_1 + \dots + \theta_n X_n,
\end{equation}
where $X \leq_{\text{st}} Y$ means $\prob(X > x) \leq \prob(Y > x)$ for all real $x$, i.e.\ $Y$ has larger exceedance (tail) probabilities than $X$ at every threshold.\\
\\
This result is particularly noteworthy for both the strength of the dominance relation it establishes and the relevance of the distribution it involves. On the one hand, first-order stochastic dominance is arguably the strongest form of stochastic comparison, with broad implications for decision-making. In particular, if the risks represent random losses, then relation~\eqref{eq:unexpected} implies that any decision-maker with a decreasing utility function over losses (i.e., who prefers smaller losses) would prefer holding the single risk $X$ rather than the diversified portfolio $\theta_1 X_1 + \dots + \theta_n X_n$ --- that is, they would choose to put all their eggs in one basket. On the other hand, Pareto distributions with infinite mean are not just theoretical constructs. They arise as useful models for representing certain rare but potentially extreme events, such as the losses from nuclear accidents \citep{hofert2012statistical, sornette2013exploring}, cyber and operational risks \citep{eling2019actual, eling2020capital, moscadelli2004modelling}, and fatalities from major earthquakes and pandemics \citep{clark2013note, cirillo2020tail}.\\
\\
Although the result of \citet{chen2025unexpected} is recent and has drawn interest, the phenomenon itself already appeared in earlier work. \citet{embrechts2002correlation}, for instance, proved a case of the stochastic dominance relation for $n = 2$ and a Pareto distribution with shape parameter $\alpha = 1/2$, and \citet{ibragimov2005new} established it for one-sided stable distributions with infinite mean. Besides this, asymptotic versions of the result have also been established. For example, for risks with regularly varying tails of index $\beta \in (0,1]$, \citet{albrecher2006tail} and \citet{embrechts2009multivariate} proved an inequality of the form
\begin{equation}\notag
\prob \left( \sum_{i=1}^n \theta_i X_i > x \right) \geq \prob(X > x)
\end{equation}
in the limit as $x \to \infty$.\\
\\
Returning to the work of \citet{chen2025unexpected}, the authors also extended their result in the same paper to weakly negatively associated and identically distributed (WNAID) super-Pareto risks. In this contribution and those that followed, the prefix \emph{super}- refers to a class of distributions obtained by applying a convex transformation to the base distribution, often with additional properties such as being increasing and anchored at zero. Subsequent work by \citet{chen2024stochastic} extended the stochastic dominance relation to negative lower orthant dependent (NLOD) risks within the so-called $\mathcal{H}$-family. Focusing on the iid case, \citet{muller2024some} and \citet{arab2024convex} established the result for super-Cauchy and InvSub risks, respectively; see also \citet{chen2025stochastic} for a closely related variant of the InvSub class (denoted $\mathcal H^*$ there).\\
\\
While each of those results assumed identically distributed risks, \citet{chen2025diversification} broadened the scope by relaxing this assumption and characterized diversification through majorization order. Given two positive exposure vectors $(\theta_1, \dots,  \theta_n)$ and $(\delta_1, \dots , \delta_n)$ with equal sum, the former is said to be smaller in majorization order if its components are less dispersed. Using this concept, they showed that the weighted sum of independent but not necessarily identically distributed Pareto risks with infinite mean becomes larger in the sense of first-order stochastic dominance as diversification increases according to the majorization order. \citet{chen2025stochastic} further extended this result to a larger distribution family, showing that this effect occurs beyond the Pareto case.\\
\\
\citet{chen2024stochastic} took a different approach to analyzing diversification with non-identically distributed risks by introducing a framework based on the generalized $r$-mean of the marginal distributions. They established a first-order stochastic dominance result comparing the generalized $r$-mean and the weighted average of NLOD risks with super-Fr{\'e}chet marginal distributions, which may differ, but all must share a common essential infimum equal to zero. A particularly relevant case occurs at $r = 1$, when the generalized $r$-mean of the marginal distributions becomes a mixture model. In that case, their result can be formulated as
\begin{equation}
\label{eq:mixture:rv}
 I_1 X_1 + \dots + I_n X_n \leq_{\text{st}} \theta_1 X_1 + \dots + \theta_n X_n,
\end{equation}
where $I_1, \dots, I_n$ are Bernoulli random variables such that exactly one of them equals $1$ and the rest are $0$, with $\prob(I_i = 1) = \theta_i$.

\subsection{On the mixture model}\label{sec:mixture}
The mixture model offers a meaningful generalization of the initial comparison between full exposure to a single risk and a diversified portfolio by preserving the idea of concentrating risk in a single position --- putting all eggs in one basket --- but doing so in a probabilistic manner. This ensures that the mixture remains a relevant benchmark for ``all-in" exposure even when the risks are not identically distributed. In the special case of identically distributed risks and the weights sum to $1$, the mixture follows the same distribution as any individual risk, recovering the original result in \eqref{eq:unexpected} as a particular instance of this broader framework.\\
\\
Moreover, the stochastic dominance relation~\eqref{eq:mixture:rv} directly implies
\begin{equation}\notag
\prob\Bigg(\sum_{i=1}^n \theta_i X_i > x \Bigg) \geq \sum_{i=1}^n \theta_i \,  \prob(X_i > x) \quad \text{for all } x,
\end{equation}
which relates the survival function of the weighted average --- typically lacking a closed-form expression --- to the corresponding marginals in a remarkably simple way. Whenever the dominance relation holds, this provides a tractable lower bound for exceedance probabilities.\\
\\
The comparison between a weighted average and the corresponding mixture model also arises in settings that involve randomization. One example appears in the actuarial literature on randomized reinsurance, which differs from standard contracts by introducing exogenous randomness into the coverage mechanism \citep{albrecher2019randomized, vincent2021structured, acciaio2025optimal}. Within this framework, the mixture can be viewed as a randomized scheme, where the reinsurer fully covers exactly one loss $X_i$, selected at random with probability $\theta_i$. The weighted average then corresponds to the deterministic counterpart: a set of quota-share treaties where the reinsurer pays a fixed fraction $\theta_i$ of each $X_i$. A second example comes from finance, where the weighted average represents a traditional diversified portfolio, while the mixture corresponds to time diversification: in each period, the investor allocates all capital to a single asset, selected at random according to the portfolio weights \citep{milevsky1998theoretical}.

\subsection{The content of this paper}
The present work builds upon the mixture framework. Our main result is the \textit{one-basket theorem}, which provides new sufficient conditions for the stochastic dominance relation~\eqref{eq:mixture:rv} to hold when the risks are independent but not necessarily identically distributed. In particular, for independent risks in the mixture setting, our contribution broadens the distributional scope of the result of \citet{chen2024stochastic} and removes the requirement that the risks share a common essential infimum.\\
\\
An important distinction from prior work is that the one-basket theorem does not require the stochastic dominance to hold for all admissible weight vectors. Earlier results identified classes of risks that satisfy relation~\eqref{eq:unexpected} or \eqref{eq:mixture:rv} uniformly across all such vectors. In contrast, our approach relaxes this requirement by providing conditions that can be checked for a specific weight vector, even if they are not met for others. This added flexibility allows us to establish stochastic dominance in cases beyond the reach of uniform results. A notable example is the discrete Pareto distribution with infinite mean: although the weighted average of such random variables may fail to dominate a single one under some weight allocations, we show that the equal-weighted average $\overline{X}_n = \frac{1}{n} \sum_{i=1}^n X_i$ satisfies $X \leq_{\text{st}} \overline{X}_n$. A related result holds for the average of St.\ Petersburg lotteries.\\
\\
Beyond the main theorem itself, we investigate the types of risks to which it applies. To that end, we introduce the concept of \textit{subscalability}, which provides a natural interpretation of the key inequalities appearing in the one-basket theorem. This notion serves as the foundation for defining $\theta$\textit{-subscalable} and \textit{completely subscalable} risks --- two related classes that satisfy the conditions of the theorem. We then analyze the properties of these classes, and relate them to existing families of risks for which dominance relations~\eqref{eq:unexpected} or \eqref{eq:mixture:rv} have been previously established.\\
\\
As a final insight, we place the one-basket theorem within a broader perspective. Given its sharp contrast with classical diversification principles, the result might initially appear anomalous. However, we show that it arises as the boundary case of a general phenomenon: diversification always increases the likelihood of exceeding small thresholds, and this local effect becomes global when the conditions of the theorem are satisfied, corresponding to first-order stochastic dominance.\\
\\
The paper is organized as follows. Section~\ref{sec:preliminaries} introduces the setting and notation. Section~\ref{section:convexorder} recalls the classical convex-order comparison between diversified and concentrated portfolios under finite expectations. Section~\ref{sec:mainresults} presents the main results, including the one-basket theorem. Section~\ref{sec:discussion} discusses connections to the literature and implications of the main result. Section~\ref{section:subscalability} introduces subscalability and studies classes of risks covered by the theorem. Section~\ref{section:beyond} provides the local-to-global perspective, and Section~\ref{conclusion} concludes. Proofs omitted from the main text, together with additional technical material, are collected in the appendix.

\section{Preliminaries}
\label{sec:preliminaries}
In what follows, we use the terms ``positive'' and ``increasing'' in the weak sense, that is, non-negative and non-decreasing, respectively, and all inequalities are understood in the non-strict sense. We write $\mathbb{N} = \lbrace 1, 2, 3, \dots \rbrace$ for the set of positive integers.\\
\\
We consider a probability space $(\Omega, \mathcal{F}, \mathbb{P})$ on which all relevant random variables are defined. The random variables of interest are $n \geq 2$ mutually independent positive random variables $X_1, \dots, X_n$, which we may refer to as risks.\\
\\
Throughout this paper, we work with the survival function of each risk $X_i$, defined as $\overline{F}_i(x) = \mathbb{P}(X_i > x)$. For completeness, we note that the cumulative distribution function is $F_i(x) = \prob(X_i \leq x)$, so that $\overline{F}_i(x) = 1 - F_i(x)$. If $X_i$ follows distribution $F_i$, we may write $X_i \sim F_i$ or equivalently $X_i \sim \overline{F}_i$. If two random variables $X$ and $Y$ have the same distribution, we write $X \sim Y$.\\
\\
A distribution that will appear repeatedly in the paper is the Pareto family. For $\alpha>0$ and $\rho>0$ we write $X\sim \mathrm{Pareto}(\alpha,\rho)$ if
\begin{equation}\notag
\Prob(X>x)=
\begin{cases}
1, & 0\le x<\rho,\\
\left(\dfrac{\rho}{x}\right)^{\alpha}, & x\ge \rho.
\end{cases}
\end{equation}
Note that $\mathbb{E}[X]=\infty$ if and only if $\alpha\le 1$.\\
\\
We define the weight vector as $\boldsymbol{\theta} := (\theta_1, \dots, \theta_n)$, which is assumed throughout to lie in the open probability simplex: 
\begin{equation}\notag
\Delta_n := \Bigg\lbrace (\theta_1, \dots, \theta_n) \in (0,1)^n : \sum_{i=1}^n \theta_i = 1 \Bigg\rbrace.
\end{equation}
The weighted average of the risks is
\begin{equation}\notag
\sum_{i=1}^n \theta_i X_i = \theta_1 X_1 + \dots + \theta_n X_n.
\end{equation}
It will be referred to as the \textit{diversified portfolio}.\\
\\
To define the corresponding mixture model, let $(I_1,\dots,I_n)\sim \mathrm{Categorical}(\boldsymbol{\theta})$, independent of $(X_1, \dots, X_n)$, so that exactly one of $I_1,\dots,I_n$ equals $1$ and the others equal $0$, with $\prob(I_i=1)=\theta_i$. The mixture is then 
\begin{equation}\notag
\sum_{i=1}^n I_i X_i = I_1 X_1 + \dots + I_n X_n,
\end{equation}
which selects exactly one of the $X_i$ at random according to the weights. In contrast to the diversified portfolio, the mixture concentrates all exposure on a single risk, and will therefore be referred to as the \textit{concentrated portfolio}.\\
\\
Although we assume that the weight vector $\boldsymbol{\theta}$ lies in $\Delta_n$, meaning that the total weight is exactly one, portfolios with total weight strictly less than one can still be represented. This can be done by setting one or more of the risks $X_i$ to be \textit{trivial} --- that is, almost surely zero.\\
\\
The survival function of the concentrated portfolio is given by 
\begin{equation}\notag
\prob \left( \sum_{i=1}^n I_i X_i > x \right) = \sum_{i=1}^n \theta_i \,  \prob(X_i > x) = \sum_{i=1}^n \theta_i \, \overline{F}_i(x).
\end{equation}
In the special case where the risks are identically distributed with common survival function $\overline{F}(x)$, this simplifies to 
\begin{equation}\notag
\prob \left( \sum_{i=1}^n I_i X_i > x \right) = \sum_{i=1}^n \theta_i \, \overline{F}(x) = \overline{F}(x),
\end{equation}
since the weights sum to one whenever $\boldsymbol{\theta} \in \Delta_n$. Thus, when the risks are identically distributed, the concentrated portfolio has the same distribution as any individual risk.\\
\\
Let $[n] := \lbrace 1, \dots, n \rbrace$. For any nonempty subset $\mu \subseteq [n]$, we define its \textit{subset weight} as
\begin{equation}\notag
    \theta_{\mu} := \sum_{i \in \mu} \theta_i.
\end{equation}
With this notation, a single subscript (e.g., $\theta_i$) refers to an individual weight, whereas a subscript corresponding to a nonempty subset (e.g., $\theta_{\mu}$) denotes the total weight over that subset.\\
\\
We now recall the definition of first-order stochastic dominance used throughout the paper. Given two random variables $X$ and $Y$, we say that $X$ is smaller than $Y$ in the sense of first-order stochastic dominance, written $X \le_{\mathrm{st}} Y$, if
\begin{equation}\label{eq:fosd-def}
\prob(X>x)\le \prob(Y>x)\qquad \text{for all }x\in\mathbb R.
\end{equation}
We write $X =_{\mathrm{st}} Y$ when \eqref{eq:fosd-def} holds with equality for all $x$, i.e.\ when $X \sim Y$. Finally, we write $X <_{\mathrm{st}} Y$ if $X \le_{\mathrm{st}} Y$ and $X\neq_{\mathrm{st}} Y$, that is, if \eqref{eq:fosd-def} is strict for at least one $x$. Since all random variables considered in this paper are non-negative, \eqref{eq:fosd-def} holds trivially on $(-\infty,0)$, where both survival functions equal $1$. Accordingly, whenever we establish first-order stochastic dominance, we will verify \eqref{eq:fosd-def} on $[0,\infty)$.\\
\\
We conclude this section by recalling the following two basic properties of first-order stochastic dominance, which will be used throughout.
\begin{lemma}\label{lemma:fosd}
(Closure under increasing transformations) If $X \leq_{\text{st}} Y$ and $f$ is any increasing function, then $f(X) \leq_{\text{st}} f(Y)$. In particular, scaling by a positive constant preserves $\leq_{\text{st}}$.\\
(Closure under convolution) Let $X_1, \dots, X_m$ and $Y_1, \dots, Y_m$ be two sets of independent random variables, and suppose that $X_i \leq_{\text{st}} Y_i$ for each $i = 1, \dots, m$. Then $X_1 + \dots + X_m \leq_{\text{st}} Y_1 + \dots + Y_m$.
\end{lemma}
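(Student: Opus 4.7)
The plan is to establish the two closure properties separately, via elementary arguments.

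For closure under increasing transformations, the key observation is that since $f$ is increasing, each upper level set $\lbrace x : f(x) > y \rbrace$ is itself an upper set in $\mathbb{R}$, hence equal to either $(a, \infty)$ or $[a, \infty)$ for $a = \inf \lbrace x : f(x) > y \rbrace$. In the open case, $\prob(f(X) > y) = \prob(X > a) \leq \prob(Y > a) = \prob(f(Y) > y)$ follows at once from the definition of $\leq_{\text{st}}$. The closed case would be handled by approximating $a$ from below along a sequence $a - 1/k$ and invoking continuity of probability to pass to the non-strict inequality.

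For closure under convolution, induction on $m$ reduces the problem to $m = 2$. For that step, I would use independence to write $\prob(X_1 + X_2 > t) = \int \overline{F}_{X_1}(t-x) \, dF_{X_2}(x)$, and apply the pointwise inequality $\overline{F}_{X_1} \leq \overline{F}_{Y_1}$ inside the integral to obtain $\prob(X_1 + X_2 > t) \leq \prob(Y_1 + X_2 > t)$, where $Y_1$ is taken independent of $X_2$. A symmetric argument conditioning on $Y_1$ then yields $\prob(Y_1 + X_2 > t) \leq \prob(Y_1 + Y_2 > t)$, and chaining the two inequalities gives the desired relation. The full result then follows by induction on $m$.

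The main obstacle is really just the boundary technicality in the first property \lightemdash jumps of $f$ across the level $y$ require some care, but continuity of probability resolves them. A cleaner alternative that bypasses this altogether is the quantile coupling: writing $\widetilde{X} = F_X^{-1}(U)$ and $\widetilde{Y} = F_Y^{-1}(U)$ for a common uniform $U$ gives $\widetilde{X} \leq \widetilde{Y}$ almost surely, from which the first property follows by applying $f$ pointwise and the second by joining $m$ such couplings independently across $i$ and summing.
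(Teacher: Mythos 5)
Your proof is correct, but note that the paper itself does not prove this lemma at all: it states the two closure properties and refers the reader to the standard monographs of M\"uller--Stoyan and Shaked--Shanthikumar, so there is no in-paper argument to compare against. Both of the routes you sketch are sound. The direct route works: for the first property, the upper level set $\lbrace x : f(x) > y\rbrace$ is indeed an interval of the form $(a,\infty)$ or $[a,\infty)$ (or empty, or all of $\mathbb{R}$, which are trivial), and the closed case is correctly dispatched by writing $\prob(X \geq a)$ as the decreasing limit of $\prob(X > a - 1/k)$; for the second, the convolution integral plus the pointwise bound $\overline{F}_{X_1} \leq \overline{F}_{Y_1}$, followed by the symmetric step and induction on $m$, is the classical argument (one should just say explicitly, as you do, that the intermediate variable $Y_1 + X_2$ is built on a product space so that the independence needed for the integral representation holds; the distribution of the sum depends only on the marginals, so this is harmless). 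The quantile-coupling alternative is cleaner and is essentially the textbook proof: $F_X^{-1}(U) \leq F_Y^{-1}(U)$ almost surely because $X \leq_{\text{st}} Y$ is equivalent to $F_X^{-1} \leq F_Y^{-1}$ pointwise, and both closure properties then follow from the elementary fact that almost-sure ordering implies stochastic ordering, with independent uniforms $U_1,\dots,U_m$ handling the convolution case. Either version would serve as a complete self-contained proof of the lemma the paper leaves to the references.
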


\noindent For proofs, as well as a comprehensive treatment of stochastic orders, see the classic textbooks by \citet{muller2002comparison} and \citet{shaked2007stochastic}.

\section{The classical view on diversification}\label{section:convexorder}
In the introduction, we reviewed several results from the literature showing that, under specific conditions, diversification can in fact increase risk. Before turning to our main contribution, which further extends these findings, we first take a step back to reaffirm the classical view according to which diversification is beneficial. In particular, we formalize this intuition within our setting using the concept of convex order, which offers a natural framework to compare the diversified and concentrated portfolios.

\subsection{A reminder on convex order}
Convex order is a well-established method for comparing random variables and has numerous applications in risk analysis and decision theory. Formally, for two random variables $X$ and $Y$, we say that $X$ is smaller than $Y$ in convex order, denoted by $X \leq_{\text{cx}} Y$, if
\begin{equation}\notag
\mathbb{E}[\varphi(X)] \leq \mathbb{E}[\varphi(Y)]
\end{equation}
holds for all convex functions $\varphi$ for which the expectations exist.\\
\\
Convex order is usually considered in contexts where expectations are finite, in which case it can be interpreted as a way to compare the variability of random quantities that share the same expectation. This interpretation arises from two key observations. First, non-constant convex functions emphasize extreme values, so the inequality in the definition suggests that $Y$ tends to take on more extreme realizations than $X$. Second, since both functions $\varphi(x) = x$ and $\varphi(x) = -x$ are convex, it follows that if $X \leq_{\text{cx}} Y$ and the expectations exist, then $\mathbb{E}[X] = \mathbb{E}[Y]$. When, in addition, these expectations are finite, they provide a meaningful common reference point. Taken together, these observations confirm that convex order, in the finite-mean case, captures differences in variability around a shared expectation.\\
\\
This notion of risk contrasts with that captured by first-order stochastic dominance, which compares the relative location of distributions. Specifically, $X \leq_{\text{st}} Y$ means that $X$ is less likely than $Y$ to exceed any given threshold, reflecting a shift in probability mass toward smaller values. Roughly speaking, while first-order stochastic dominance captures a difference in size, convex order reflects (under finite expectations) differences in variability between random variables of the same size.\\
\\
Several important consequences follow from convex order, again assuming the relevant expectations are finite. For example, any risk-averse decision-maker will prefer the loss $X$ over $Y$ whenever $X \leq_{\text{cx}} Y$, since risk aversion corresponds to a utility function that is convex over the loss domain. Likewise, since the function $\varphi(x) = (x - c)^2$ is convex for any fixed $c \in \mathbb{R}$, the convex order relation implies that $X$ has a smaller variance than $Y$. This is closely related to the classical justification for diversification provided by the law of large numbers and Modern Portfolio Theory, as discussed in the introduction.\\
\\
Further applications of convex order, along with a detailed treatment, can be found in \citep{muller2002comparison} and \citep{shaked2007stochastic}.

\subsection{Convex order in our setting}
To show that a convex order relation holds in our setting, the key observation is that the diversified portfolio can be expressed as the conditional expectation of the concentrated one, given the random variables $\mathbf{X} := (X_1, \dots, X_n)$. That is, letting $P_C := \sum_{i=1}^n I_i X_i$ and $P_D := \sum_{i=1}^n \theta_i X_i$, we have
\begin{equation}\notag
\mathbb{E}[P_C | \mathbf{X}] = P_D.
\end{equation}
Applying Jensen's inequality to any convex function $\varphi$ for which expectations exist, we obtain
\begin{equation}\notag
\mathbb{E}[\varphi(P_D)] = \mathbb{E}[\varphi(\mathbb{E}[P_C | \mathbf{X}])] \leq \mathbb{E}[\mathbb{E}[\varphi(P_C) | \mathbf{X}]] = \mathbb{E}[\varphi(P_C)],
\end{equation}
which establishes the convex order relation.\\
\\
This argument yields the following lemma, a standard consequence of Jensen's inequality.
\begin{lemma}
Let $X_1, \dots, X_n$ be $n \geq 2$ independent positive random variables. Given a weight vector $\boldsymbol{\theta} \in \Delta_n$, let $(I_1, \dots, I_n) \sim \text{Categorical}(\boldsymbol{\theta})$, independent of $(X_1, \dots, X_n)$. Then
\begin{equation}\notag
\theta_1 X_1 + \cdots + \theta_n X_n \leq_{\text{cx}} I_1 X_1 + \cdots + I_n X_n.
\end{equation}
\end{lemma}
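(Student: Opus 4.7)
The proof plan follows directly from the observation made in the paragraph preceding the lemma, so my main task is to organize that observation into a self-contained argument. First I would set up the notation by letting $P_C = \sum_{i=1}^n I_i X_i$ and $P_D = \sum_{i=1}^n \theta_i X_i$, and fix an arbitrary convex function $c$ for which the relevant expectations are well-defined. The goal is to verify $\mathbb{E}[c(P_D)] \leq \mathbb{E}[c(P_C)]$, which is exactly the definition of $P_D \leq_{\text{cx}} P_C$.

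The first step is to compute the conditional expectation $\mathbb{E}[P_C \mid \mathbf{X}]$. Because $\mathbf{I}$ is independent of $\mathbf{X} = (X_1, \dots, X_n)$ and $\mathbb{E}[I_i] = \prob(I_i = 1) = \theta_i$, linearity of conditional expectation yields
\begin{equation}\notag
\mathbb{E}\!\left[\sum_{i=1}^n I_i X_i \,\Big|\, \mathbf{X}\right] = \sum_{i=1}^n X_i \,\mathbb{E}[I_i \mid \mathbf{X}] = \sum_{i=1}^n \theta_i X_i = P_D.
\end{equation}

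The second step is to apply Jensen's inequality in its conditional form: for any convex $c$,
\begin{equation}\notag
c\bigl(\mathbb{E}[P_C \mid \mathbf{X}]\bigr) \leq \mathbb{E}[c(P_C) \mid \mathbf{X}] \quad \text{almost surely.}
\end{equation}
Substituting the identity from the first step, the left-hand side becomes $c(P_D)$. Taking unconditional expectations on both sides and invoking the tower property gives $\mathbb{E}[c(P_D)] \leq \mathbb{E}[\mathbb{E}[c(P_C) \mid \mathbf{X}]] = \mathbb{E}[c(P_C)]$, which establishes the convex order relation.

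There is essentially no main obstacle, as this is a textbook consequence of Jensen's inequality once the conditioning identity is in place; the only minor point to be careful about is invoking the conditional Jensen inequality under the assumption that the relevant expectations exist, which is exactly the hypothesis built into the definition of convex order recalled earlier in the section. No further machinery is needed.
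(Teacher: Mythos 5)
Your proposal is correct and matches the paper's own argument exactly: the paper also identifies $\mathbb{E}[P_C \mid \mathbf{X}] = P_D$ and then applies the conditional Jensen inequality together with the tower property. Your write-up just fills in the (routine) computation of $\mathbb{E}[I_i \mid \mathbf{X}] = \theta_i$ via independence, which the paper leaves implicit.
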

\noindent Hence, under finite expectations, the interpretation and consequences of convex order outlined in the previous subsection apply. In particular, the diversified portfolio exhibits less variability than the concentrated one while preserving the same mean, confirming the classical intuition that diversification reduces risk in well-behaved settings.\\
\\
By contrast, when expectations are infinite, they no longer provide meaningful information about the location of risks, and the interpretation above breaks down (see \citep{cote2025convex} for a detailed analysis of convex order under non-finite expectations). In these situations, counterintuitive phenomena can emerge, including cases where the diversified portfolio is actually larger than the concentrated one in first-order stochastic dominance, and thus presents more risk in that sense. The next section introduces the one-basket theorem, which provides sufficient conditions under which this surprising reversal occurs.

\section{Main results}
\label{sec:mainresults}
In this section, we present our core findings. We begin with the single-risk case, which paves the way for the multi-risk framework. We then establish a general inequality comparing the distribution of the diversified portfolio to that of the individual risks. This inequality serves as an intermediate step leading to the one-basket theorem, which provides sufficient conditions for the diversified portfolio to be larger than the concentrated one in the sense of first-order stochastic dominance. Connections, interpretation, and further implications are discussed separately in Section~\ref{sec:discussion}.

\subsection{The case of a single risk}

Consider the simple case where only one of the risks is non-trivial, and the others are almost surely zero. For ease of notation, we temporarily omit subscripts. The concentrated portfolio then reduces to $IX$, and the diversified portfolio simplifies to $\theta X$, where $I \sim \text{Bernoulli}(\theta)$ is independent of $X$, and $\theta \in (0,1)$.\\
\\
To determine when first-order stochastic dominance holds between these two portfolios, we compare their respective survival functions. The probability that $I X$ exceeds a threshold $x \geq 0$ is $\prob(I X > x) = \theta \, \overline{F}(x)$, while for $\theta X$, it is $\prob(\theta X > x) = \overline{F}(x/\theta)$. By the definition of first-order stochastic dominance, this leads to the following result:
\begin{lemma}\label{lemma:single}
Let $X$ be a positive random variable. Given $\theta \in (0,1)$, let $I \sim \text{Bernoulli}(\theta)$ be independent of $X$. Then $I X \leq_{\text{st}} \theta X$ if and only if the inequality $\theta \, \overline{F}(x) \leq \overline{F}(x/\theta)$ holds for all $x \geq 0$.
\end{lemma}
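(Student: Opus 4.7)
The plan is to prove both implications directly from the definition of first-order stochastic dominance by computing the survival functions of the two random variables $IX$ and $\theta X$. Since both variables are non-negative, the relation $IX \leq_{\text{st}} \theta X$ reduces to checking the survival inequality only for thresholds $x \geq 0$, as noted in Section~\ref{sec:preliminaries}.

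The main step is to write down each survival function explicitly. For $IX$, I would condition on the two possible values of $I$: on $\{I = 1\}$ the variable equals $X$, while on $\{I = 0\}$ it equals $0$, which fails to exceed any $x \geq 0$. Using independence between $I$ and $X$, this gives $\prob(IX > x) = \theta\,\overline{F}(x)$. For $\theta X$, a direct scaling of the threshold yields $\prob(\theta X > x) = \overline{F}(x/\theta)$, which is well-defined because $\theta > 0$. Substituting these two expressions into the definition of first-order stochastic dominance produces exactly the condition stated in the lemma, establishing both directions of the equivalence at once.

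I do not anticipate any substantive obstacle, since the result is essentially a reformulation of the definition in terms of the marginal distribution $\overline{F}$. The only point worth mentioning is the boundary case $x = 0$, where the inequality reduces to $\theta\,\overline{F}(0) \leq \overline{F}(0)$ and holds trivially given $\theta \in (0,1)$.
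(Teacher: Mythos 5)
Your proposal is correct and matches the paper's own argument: the paper likewise computes $\prob(IX > x) = \theta\,\overline{F}(x)$ and $\prob(\theta X > x) = \overline{F}(x/\theta)$ for $x \geq 0$ and reads off the equivalence from the definition of first-order stochastic dominance restricted to the non-negative half-line. No gaps.
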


\noindent The inequality $\theta \overline{F}(x) \leq \overline{F}(x/\theta)$ thus fully determines when $I X$ is smaller than $\theta X$ in the sense of first-order stochastic dominance, and hints at a pattern that may extend to the multi-risk setting. In fact, it will be central to both the proof and the formulation of the next theorem, which requires keeping track of the points where this inequality holds, for each risk and across various subset weights. To that end, we define for each $X_i$ the region
\begin{equation}\notag
r_i(\theta) := \lbrace x \geq 0 : \theta \, \overline{F}_{i}(x) \leq \overline{F}_i(x/\theta)\rbrace, \qquad \theta \in (0,1).
\end{equation}
In Section \ref{section:subscalability}, we study two nested classes of risks defined by this inequality: those for which there exists $\theta\in(0,1)$ such that $\theta\,\overline F(x)\le \overline F(x/\theta)$ holds for all $x\ge 0$ ($\theta$-subscalable risks), and the smaller class for which this inequality holds for all $x\ge 0$ and all $\theta\in(0,1)$ (completely subscalable risks). Among the properties established in this section, Lemma~\ref{lemma:infinite} shows that whenever the inequality holds for some $\theta\in(0,1)$ for a non-trivial risk, the risk is heavy-tailed and has infinite mean.

\subsection{The general case}\label{subsection:generalcase}

When multiple risks are involved, extending the single-risk result requires a more refined approach. A careful partition of the sample space (see Lemma~\ref{lemma:partition} in the appendix) allows us to bound the survival function of the diversified portfolio from below over the region 
\begin{equation}\notag
\mathcal{R}(\boldsymbol{\theta}) := \bigcap_{i=1}^n \bigcap_{\lbrace i \rbrace \subseteq \mu \subsetneq [n]} r_i(\theta_{\mu}), \qquad \boldsymbol{\theta} \in \Delta_n,
\end{equation}
where $\theta_{\mu} = \sum_{i \in \mu} \theta_i$ is the subset weight of $\mu$, as previously defined.\\
\\
Over that region, the lower bound takes the form of a weighted average of the marginal survival functions, as established in the following theorem.

\begin{restatable}{theorem}{lowerbound}
\label{theorem:lowerbound}
Let $X_1, \dots, X_n$ be $n \geq 2$ independent positive random variables, with survival functions $\overline{F}_1, \dots, \overline{F}_n$. Given a weight vector $\boldsymbol{\theta} \in \Delta_n$, the inequality
\begin{equation}\notag
\prob\Bigg( \sum_{i = 1}^n \theta_i X_i > x \Bigg) \geq \sum_{i = 1}^n \theta_i \, \prob(X_i > x)
\end{equation}
holds for all $x \in \mathcal{R}(\boldsymbol{\theta})$.
\end{restatable}

\noindent The proof is given in Appendix~\ref{appendix:theorem}.\\
\\
That theorem has several key implications. Notably, as we show in Section~\ref{subsection:zeroT}, the region $\mathcal{R}(\boldsymbol{\theta})$ always includes a non-trivial interval $[0, t(\boldsymbol{\theta}))$, which ensures that the inequality holds for small values of $x$ in all cases. However, our main focus here is on the situations where the inequality extends to all positive values.\\
\\
Recall from Section~\ref{sec:preliminaries} that the survival function of the concentrated portfolio equals the weighted average of the marginal survival functions. Therefore, Theorem~\ref{theorem:lowerbound} implies that the inequality
\begin{equation}\notag
\prob\left( \sum_{i = 1}^n I_i X_i > x \right) \leq \prob\left( \sum_{i=1}^n \theta_i X_i > x \right)
\end{equation}
holds for all $x \geq 0$ whenever $\mathcal{R}(\boldsymbol{\theta}) = [0,\infty)$, so the diversified portfolio is larger than the concentrated portfolio in the sense of first-order stochastic dominance. By the definition of $\mathcal{R}(\boldsymbol{\theta})$, this occurs precisely when, for each $i \in [n]$ and every $\mu$ with $\lbrace i \rbrace \subseteq \mu \subsetneq [n]$, the inequality $\theta_{\mu} \, \overline{F}_i(x) \leq \overline{F}_i(x / \theta_{\mu})$ holds for all $x \geq 0$.\\
\\
This argument proves our main result:
\begin{theorem}[One-basket theorem]\label{theorem:onebasket}
Let $X_1, \dots, X_n$ be $n \geq 2$ independent positive random variables, with survival functions $\overline{F}_1, \dots, \overline{F}_n$. Given a weight vector $\boldsymbol{\theta} \in \Delta_n$, let $(I_1, \dots, I_n) \sim \text{Categorical}(\boldsymbol{\theta})$, independent of $(X_1, \dots, X_n)$. Suppose that for each $i \in [n]$ and every $\mu$ with $\lbrace i \rbrace \subseteq \mu \subsetneq [n]$, the condition
\begin{equation}\label{eq:onebasket:condition}
\theta_{\mu} \, \overline{F}_i(x) \leq \overline{F}_i(x / \theta_{\mu}) \qquad \text{for all } x \geq 0,
\end{equation}
is satisfied. Then
\begin{equation}\label{eq:onebasket:dominance}
I_1X_1 + \dots + I_nX_n \leq_{\text{st}} \theta_1X_1 + \dots + \theta_nX_n.
\end{equation}
\end{theorem}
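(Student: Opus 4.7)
The plan is to recognize the one-basket theorem as a direct corollary of Theorem \ref{theorem:lowerbound}, whose proof carries the technical weight. The strategy is to show that the hypothesis \eqref{eq:onebasket:condition} is precisely the uniform version of the pointwise condition defining $\mathcal{R}(\boldsymbol{\theta})$, and then to translate the resulting survival-function inequality into the claimed stochastic dominance.

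First I would observe that for each $i \in [n]$ and each $\mu \subset [n]$ with $\mu \supseteq \{i\}$, the assumption that $\theta_{\mu} \, \overline{F}_i(x) \leq \overline{F}_i(x/\theta_{\mu})$ holds for \emph{all} $x \geq 0$ is simply the statement that $r_i(\theta_{\mu}) = [0,\infty)$. Intersecting over the relevant $i$ and $\mu$ then gives $\mathcal{R}(\boldsymbol{\theta}) = [0,\infty)$, so Theorem \ref{theorem:lowerbound} yields
\begin{equation}\notag
\prob\Bigg(\sum_{i=1}^n \theta_i X_i > x\Bigg) \geq \sum_{i=1}^n \theta_i \, \overline{F}_i(x) \qquad \text{for all } x \geq 0.
\end{equation}

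Next, I would recall from Section \ref{sec:preliminaries} that the right-hand side is exactly the survival function of the concentrated portfolio $\sum_{i=1}^n I_i X_i$. Thus the inequality above rewrites as $\prob(\sum_{i=1}^n I_i X_i > x) \leq \prob(\sum_{i=1}^n \theta_i X_i > x)$ for every $x \geq 0$. For $x < 0$, both survival functions equal $1$ because the risks are positive, so the inequality extends trivially to all of $\mathbb{R}$. By the definition of first-order stochastic dominance, this delivers \eqref{eq:onebasket:dominance}.

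The main obstacle does not really lie in this argument but in the proof of Theorem \ref{theorem:lowerbound} itself (deferred to the appendix): once that lower bound on the survival function of the diversified portfolio is available on $\mathcal{R}(\boldsymbol{\theta})$, the one-basket theorem follows essentially by inspection. The only conceptual point worth flagging is that the hypothesis must be imposed for every proper subset $\mu \supseteq \{i\}$, not only for $\mu = \{i\}$, because Theorem \ref{theorem:lowerbound} needs the inequality to hold at each of the intermediate subset weights $\theta_{\mu}$ used in the partition argument underlying its proof.
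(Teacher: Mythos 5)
Your proposal is correct and matches the paper's own derivation: the paper also obtains the one-basket theorem by observing that hypothesis~\eqref{eq:onebasket:condition} is exactly the condition $\mathcal{R}(\boldsymbol{\theta}) = [0,\infty)$, applying Theorem~\ref{theorem:lowerbound}, and identifying the right-hand side with the survival function of the concentrated portfolio. Nothing is missing.
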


\noindent The next corollary identifies the equality case in Theorem~\ref{theorem:onebasket}.

\begin{restatable}{corollary}{onebasketequality}
\label{cor:onebasket-equality}
Under the assumptions of Theorem~\ref{theorem:onebasket}, equality holds in~\eqref{eq:onebasket:dominance} if and only if $X_1=\cdots=X_n=0$ almost surely. Thus, if at least one of the random variables $X_1, \dots, X_n$ is non-trivial, then
\begin{equation}\notag
I_1 X_1 + \cdots + I_n X_n <_{\text{st}} \theta_1 X_1 + \cdots + \theta_n X_n.
\end{equation}
\end{restatable}
\noindent The proof is deferred to Appendix~\ref{appendix:cor:onebasket-equality}
\section{Discussion}\label{sec:discussion}
This section situates the one-basket theorem within the literature on diversification failures and discusses its implications. We proceed in three steps. First, we provide illustrations and connect our result to earlier contributions. Second, we relate our comparison between the diversified and concentrated portfolios to dependence-based notions of diversification, and explain why the ordering established by the one-basket theorem should not be interpreted as a dependence effect. Third, we discuss implications for randomized reinsurance.
\subsection{Illustrations and connections to prior work}
As already mentioned in the introduction, the mixture case (corresponding to the concentrated portfolio) was also studied by \citet{chen2024stochastic}, who established relation~\eqref{eq:onebasket:dominance} for NLOD (negative lower orthant dependent) super-Fr{\'e}chet risks with possibly different marginal distributions, under the assumption that all risks share a common essential infimum equal to zero. In contrast, our result assumes independence, but applies to a broader class of distributions (see Lemma~\ref{lemma:frechet}), and does not require a common essential infimum.\\
\\
The following example gives a concrete case where the result applies even though the risks have different essential infima.
\begin{example}[Non-identically distributed Pareto]\label{ex:noniid:pareto}
Let \(X_1,\dots,X_n\) be independent, where \(X_i\) is Pareto$(\alpha_i, \rho_i)$, with $\alpha_i \in (0,1]$ and $\rho_i > 0$ for all $i$. Since \(\alpha_i\le 1\), each \(X_i\) has infinite mean, and its essential infimum equals \(\rho_i\), which may depend on \(i\). Moreover, for each $i$ and for every $\theta \in (0,1)$, condition~\eqref{eq:onebasket:condition} holds. Hence it holds for all relevant subset weights \(\theta_\mu\in(0,1)\) whenever $\boldsymbol{\theta} \in \Delta_n$, so the one-basket theorem applies and yields \eqref{eq:onebasket:dominance}.
\end{example}
\vspace{.5em}
\noindent While this example highlights the flexibility of the theorem, our result also covers the traditional setting where the risks are identically distributed. In this scenario, the concentrated portfolio has the same distribution as any individual risk, and the comparison reduces to evaluating whether a single risk is stochastically dominated by a weighted average of its iid copies. As mentioned earlier, this case has been studied by \citet{ibragimov2005new}, \citet{chen2025unexpected}, \citet{chen2024stochastic}, \citet{muller2024some}, and \citet{arab2024convex}.\\
\\
A notable departure from prior work is that the one-basket theorem does not require the stochastic dominance relation to hold uniformly across all admissible weight vectors. Instead, it identifies a set of conditions that can be tested for any specific weight vector. If these conditions are satisfied, first-order stochastic dominance is guaranteed, regardless of whether they fail under other allocations. This relaxation allows us to establish dominance results for distributions beyond the reach of existing results.\\
\\
To illustrate this phenomenon, let $X_1,\ldots,X_n$ be iid copies of an integer-valued random variable $X$, with survival function
\begin{equation}\label{eq:dpar}
\overline{F}(x) = 
\begin{cases}
1, & x < 1,\\
\frac{1}{\lfloor x \rfloor + 1}, & x \geq 1,
\end{cases}
\end{equation}
where $\lfloor x \rfloor$ denotes the greatest integer less than or equal to $x$. This is a discretized Pareto random variable: $X$ is supported on $\lbrace 1, 2, \dots \rbrace$ and satisfies $X \sim \lfloor Y \rfloor$, where $Y$ is Pareto$(1,1)$. In the iid case, the concentrated portfolio satisfies $I_1 X_1 + \cdots + I_n X_n \sim X $, so relation~\eqref{eq:onebasket:dominance} reduces to \(X \le_{\mathrm{st}} \sum_{i=1}^n \theta_i X_i\). This dominance does not hold for all allocations: for instance, when \(n=2\) and $\theta_1 = 0.1$, we have
\(\prob(X>1.9) = \frac{1}{2} > \frac{61}{132} = \prob(\theta_1X_1+\theta_2X_2> 1.9 )\), so $X \not\le_{\mathrm{st}} \theta_1 X_1 + \theta_2 X_2$. However, for equal weights \(\theta_i=\frac{1}{n}\), the dominance does hold for all \(n\ge 2\), as stated next.
\begin{restatable}[Discrete Pareto]{proposition}{discpareto}
\label{prop:discpareto}
Let \(X\) be a random variable with survival function $\overline{F}$ defined in \eqref{eq:dpar}. Let further \(\overline X_n=\frac1n\sum_{i=1}^n X_i\) be the average of \(n\ge 2\) iid copies of \(X\). Then
\[
X \le_{\mathrm{st}} \overline X_n \qquad \text{for all } n\ge 2.
\]
\end{restatable}

\noindent The proof is given in Appendix~\ref{appendix:discrete}.
\begin{remark}\label{rem:discpareto-nonmonotone}
A natural question is whether Proposition~\ref{prop:discpareto} can be strengthened to
$\overline X_m \le_{\mathrm{st}} \overline X_n$ for all $1\le m\le n$. For iid $\mathrm{Pareto}(\alpha,\rho)$ risks with $\alpha\in(0,1]$, this ordering holds; see \citet[Corollary~1]{chen2025diversification}. For the discrete Pareto distribution~\eqref{eq:dpar}, however, the ordering fails for some pairs $m < n$. For instance, $\prob\!\left(\overline{X}_3 > \tfrac{5}{4}\right) = \tfrac{7}{8}
\;>\; \tfrac{41}{48}
= \prob\!\left(\overline{X}_4 > \tfrac{5}{4}\right)$, so $\overline X_3\not\le_{\mathrm{st}}\overline X_4$.
\end{remark}
\subsection{Diversification and dependence}\label{subsec:discussion-dependence}

The notion of diversification considered in this paper generalizes the classical comparison between a single risk and a diversified portfolio: the concentrated portfolio $\sum_{i=1}^n I_i X_i$ corresponds to taking full exposure to one component (selected at random), whereas the diversified portfolio $\sum_{i=1}^n \theta_i X_i$ takes simultaneous exposure to all components through a weighted average. Of course, diversification can be formalized in other ways. For example, a dependence-based view regards a portfolio as more diversified when its components exhibit less positive dependence (or more negative dependence),
e.g.\ when moving away from comonotonicity (perfect positive dependence) toward independence or a negative-dependence regime.\\
\\
While these two perspectives are conceptually different, they align in the identically distributed case. Assume that $X_1,\dots,X_n$ are identically distributed with $X_i \sim X$, and let $(X_1^*,\dots,X_n^*)$ be a comonotone coupling with the same marginals. Then $X_1^*=\cdots=X_n^*$ almost surely, so $\sum_{i=1}^n \theta_i X_i^* \sim X$ for any $\boldsymbol{\theta} \in \Delta_n$. Hence, 
\begin{equation}\notag
X\le_{\mathrm{st}}\sum_{i=1}^n\theta_iX_i
\qquad\Longleftrightarrow\qquad
\sum_{i=1}^n\theta_iX_i^* \le_{\mathrm{st}} \sum_{i=1}^n\theta_iX_i.
\end{equation}
Relation $X\le_{\mathrm{st}}\sum_{i=1}^n\theta_iX_i$ thus admits an equivalent dependence interpretation: replacing comonotonicity by the dependence structure of $(X_1,\dots,X_n)$ increases risk in
the sense of first-order stochastic dominance. In our setting this corresponds to independence, and related work shows that the same conclusion can hold under negative-dependence classes such as WNAID and NLOD \citep{chen2025unexpected,chen2024stochastic}. This runs counter to common intuition, since in finite-mean settings moving away from comonotonicity typically reduces risk (e.g.\ in convex order); see \citep{muller2002comparison, denuit2006actuarial, shaked2007stochastic}.\\
\\
With this perspective in mind, the one-basket theorem could be misread as pointing in the opposite direction. Indeed, the vector $(I_1X_1,\dots,I_nX_n)$ is mutually exclusive (at most one component is nonzero), hence strongly negatively dependent, whereas $(\theta_1X_1,\dots,\theta_nX_n)$ is independent. Thus, when
\[
I_1 X_1 + \cdots + I_n X_n \le_{\mathrm{st}} \theta_1 X_1 + \cdots + \theta_n X_n
\]
holds, one might be tempted to attribute the ordering to a hypothetical risk-reducing effect of negative dependence. This interpretation is incorrect, as we are not comparing dependence structures at fixed marginals: $I_iX_i$ and $\theta_iX_i$ generally have different distributions, so both the marginals and the dependence structure change across the two portfolios. 

\subsection{Randomized reinsurance}
Our setting reveals a noteworthy implication for randomized reinsurance, as studied by \citet{albrecher2019randomized, vincent2021structured, acciaio2025optimal}, and described in Section~\ref{sec:mixture}. For illustration, it is sufficient to consider the single-risk case. Let $X \sim \overline{F}$ represent an insurance loss. Under a classical quota-share treaty, the insurer cedes a fixed fraction $\theta X$ to the reinsurer and retains $(1 - \theta) X$. In contrast, consider a randomized reinsurance scheme in which the reinsurer covers the full loss $X$ with probability $\theta$, and pays nothing otherwise. The ceded and retained losses are then $I X$ and $(1 - I) X$, where $I \sim \text{Bernoulli}(\theta)$ is independent of $X$. In line with the result established in Lemma~\ref{lemma:single}, if the survival function $\overline{F}$ satisfies the inequalities
\begin{equation}\notag
\theta \, \overline{F}(x) \leq \overline{F}(x / \theta) \quad \text{and} \quad (1 - \theta) \, \overline{F}(x) \leq \overline{F}(x / (1 - \theta)) \quad \text{for all } x \geq 0,
\end{equation}
then both $I X \leq_{\text{st}} \theta X$ and $(1 - I) X \leq_{\text{st}} (1 - \theta) X$ hold. In that case, both the reinsurer and the insurer are better off under the randomized scheme in the sense of first-order stochastic dominance --- a perhaps surprising outcome, since adding exogenous randomness leads to better risk-sharing.\\
\\
Although we focused on the single-risk case for clarity, similar improvements may arise in multivariate settings, as shown by the one-basket theorem. Moreover, extensions to other randomized contract designs --- beyond the all-or-nothing coverage --- may also prove beneficial under suitable conditions, offering interesting directions for future research.

\section{Subscalability}\label{section:subscalability}

In the previous section, we showed that a first-order stochastic dominance relation holds between the diversified and concentrated portfolios when certain conditions on survival functions are met. Here, we explore the types of risks that satisfy these conditions by introducing the concept of subscalability, which underpins the definitions $\theta$-subscalable and completely subscalable risks --- two nested classes covered by the one-basket theorem. We then examine the properties of those two classes and relate them to risk families studied in earlier work.

\subsection{The concept of subscalability}\label{subsection:subscalability}

The one-basket theorem relies on a key condition: that the survival functions of the risks satisfy inequalities of the form $\theta \, \overline{F}(x) \leq \overline{F}(x/\theta)$, where $\theta \in (0,1)$. Intuitively, if the survival function of a risk $X$ satisfies this inequality at a given point $x$, it means that scaling down the risk by a factor $\theta$ reduces the probability of exceeding $x$ less than proportionally. That is,
\begin{equation}\notag
\theta \, \overline{F}(x) \leq \overline{F}(x/\theta) \quad \Longleftrightarrow \quad \theta \, \prob(X > x) \leq \prob(\theta X > x).
\end{equation}
At such points, we may say that the risk (or its survival function) resists scaling --- a property that we refer to as \emph{subscalability}.

\subsection{$\theta$-subscalable risks}
Building on this concept, we now define the class of \emph{$\theta$-subscalable} risks.
\begin{definition}\label{def:theta-subscalable}
For a given $\theta \in (0,1)$, a positive random variable $X \sim \overline{F}$ is said to be $\theta$-subscalable if the inequality $\theta \, \overline{F}(x) \leq \overline{F}(x/\theta)$ holds for all $x \geq 0$. We also refer to the corresponding distribution as $\theta$-subscalable.
\end{definition}

\noindent We begin with a basic implication of $\theta$-subscalability. Recall that a survival function $\overline F$ is
\textit{heavy-tailed} \citep{resnick2007heavy, embrechts2013modelling} if
\[
\lim_{x\to\infty} e^{t x}\,\overline F(x)=\infty \qquad \text{for all }t>0.
\]
\noindent Following \citet{embrechts2002correlation} and \citet{chen2025riskexchange}, we call a distribution \textit{extremely heavy-tailed} if it is heavy-tailed in the above sense and has infinite mean. The next lemma shows that, for non-trivial risks, $\theta$-subscalability is a sufficient condition for this
extreme tail behavior.
\begin{lemma}\label{lemma:infinite}
Let $\theta \in (0,1)$, and suppose $X \sim \overline{F}$ is a non-trivial $\theta$-subscalable random variable. Then $X$ is extremely heavy-tailed.
\end{lemma}
\begin{proof}
Since $X$ is positive and non-trivial, there exists $x_0>0$ such that $\overline F(x_0)>0$. Because $X$ is $\theta$-subscalable, Lemma~\ref{lemma:thetak} implies that $X$ is also $\theta^k$-subscalable for every $k\in\mathbb N$, that is,
\[
\theta^k\,\overline F(x)\le \overline F(x/\theta^k)\qquad \text{for all }x\ge 0.
\]
Evaluating this inequality at $x=x_0$ yields
\[
\overline F\!\left(\frac{x_0}{\theta^k}\right)\ge \theta^k\,\overline F(x_0), \qquad k\in\mathbb N.
\]
Now fix $x\ge x_0$ and choose $k\in\mathbb N$ such that
\[
\frac{x_0}{\theta^{k-1}}\le x<\frac{x_0}{\theta^{k}}.
\]
By monotonicity of $\overline F$,
\[
\overline F(x)\ge \overline F\!\left(\frac{x_0}{\theta^{k}}\right)\ge \theta^{k}\,\overline F(x_0).
\]
Moreover, $\frac{x_0}{\theta^{k-1}}\le x$ implies $\theta^{k-1}\ge x_0/x$, hence $\theta^{k}\ge \theta x_0/x$.
Therefore, letting $c:=\theta x_0\,\overline F(x_0)>0$, we obtain
\[
\overline F(x)\ge \frac{c}{x}\qquad \text{for all }x\ge x_0.
\]
It follows that for any $t>0$,
\[
e^{t x}\overline F(x)\ge c\,\frac{e^{t x}}{x}\xrightarrow[x\to\infty]{}\infty,
\]
so $X$ is heavy-tailed. Finally, since $X\ge 0$,
\[
\mathbb E[X]=\int_0^\infty \overline F(u)\,\diff u
\ge \int_{x_0}^\infty \frac{c}{u}\,\diff u=\infty,
\]
so $X$ has infinite mean. Hence $X$ is extremely heavy-tailed.
\end{proof}

\begin{remark}\label{rem:converse-false}
The converse of Lemma~\ref{lemma:infinite} is false: extreme heavy-tailedness alone does not guarantee $\theta$-subscalability. For instance, the survival function $\overline{F}(x)=\mathbf{1}_{\{x<e\}}+\mathbf{1}_{\{x\ge e\}}\frac{1}{x\ln x}$ defines a positive random variable with $\mathbb E[X]=\infty$ and a heavy tail in the above sense, yet it fails to be $\theta$-subscalable for any $\theta\in(0,1)$.
\end{remark}
\vspace{1mm}
\noindent Beyond these tail properties, the class of $\theta$-subscalable risks comprises a variety of distributions, including both discrete and continuous cases. Some distributions are $\theta$-subscalable for all $\theta\in(0,1)$, such as the Pareto family in Example~\ref{ex:noniid:pareto}. Distributions with this stronger property are studied further in Section~\ref{subsection:completely}. In contrast, other distributions are $\theta$-subscalable only for a restricted set of values of $\theta$. This phenomenon occurs for the St.\ Petersburg lottery (introduced below) and in discrete Pareto models --- see Lemma~\ref{lemma:stpetersburg} and Lemma~\ref{lemma:discreteParetoRegion} (appendix) for explicit characterizations of the admissible sets of $\theta$ in these two examples.\\
\\
A natural question then arises regarding the set of $\theta$ for which a given distribution is $\theta$-subscalable:
how small can it be? While this set can of course be empty for some distributions, the following lemma shows that
if it contains at least one value, then it must also contain an infinite sequence of smaller ones.

\begin{lemma}\label{lemma:thetak}
Let $\theta \in (0,1)$ and suppose that the random variable $X$ is $\theta$-subscalable. Then $X$ is also $\theta^{k}$-subscalable, for all $k \in \mathbb{N}$.
\end{lemma}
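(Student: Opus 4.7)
The plan is a straightforward induction on $k$, exploiting the fact that the $\theta$-subscalability inequality $\theta\,\overline{F}(x)\leq\overline{F}(x/\theta)$ must hold for \emph{all} $x\geq 0$, and can therefore be applied at rescaled arguments.

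The base case $k=1$ is just the definition of $\theta$-subscalability. For the inductive step, I would suppose that $\theta^{k}\,\overline{F}(x)\leq\overline{F}(x/\theta^{k})$ holds for every $x\geq 0$, and then exploit the base inequality at the point $x/\theta^{k}$, which gives $\theta\,\overline{F}(x/\theta^{k})\leq\overline{F}(x/\theta^{k+1})$. Chaining these two bounds yields
\begin{equation}\notag
\theta^{k+1}\,\overline{F}(x) \;=\; \theta\cdot\theta^{k}\,\overline{F}(x) \;\leq\; \theta\,\overline{F}(x/\theta^{k}) \;\leq\; \overline{F}(x/\theta^{k+1}),
\end{equation}
which is precisely the $\theta^{k+1}$-subscalability condition at $x$. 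Since $x\geq 0$ was arbitrary, this closes the induction.

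There is no real obstacle here: the argument works because the subscalability inequality is universally quantified over $x\geq 0$, so both sides of the induction can be evaluated at whatever rescaled argument is convenient. The only point worth highlighting in the write-up is the explicit choice to instantiate the base inequality at $x/\theta^{k}$ rather than at $x$ itself, since it is this telescoping that converts one application of the $\theta$-bound into a factor of $\theta$ on the left-hand side and a division by $\theta$ inside the argument on the right-hand side.
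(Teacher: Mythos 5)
Your proof is correct and uses essentially the same idea as the paper: the paper directly composes the base inequality evaluated at $x/\theta^{j}$ for $j=0,\dots,k-1$, which is exactly the telescoping you carry out, merely phrased as an explicit induction on $k$. No gaps.
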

\begin{proof}
Since the inequality $\theta \, \overline{F}(x) \leq \overline{F}(x/\theta)$ holds for all $x \geq 0$, applying it at $x / \theta^j$, for each $j = 0, \dots, k-1$, gives
\begin{equation}\notag
\theta \, \overline{F}(x/\theta^{j}) \leq \overline{F}(x/\theta^{j+1}) \quad \text{for all } x \geq 0.
\end{equation}
Composing these inequalities yields
\begin{equation}\notag
\theta^k \, \overline{F}(x) \leq \overline{F}(x/\theta^{k}) \quad \text{for all } x \geq 0,
\end{equation}
which proves that $X$ is $\theta^k$-subscalable.
\end{proof}

\noindent Thus, subscalability at a single $\theta \in (0,1)$ entails subscalability at all smaller weights of the form $\theta^k$ with $k \in \mathbb{N}$, making the geometric sequence $\lbrace \theta^k : k \in \mathbb{N} \rbrace$ the minimal set for which subscalability must hold.\\
\\
Interestingly, for some distributions this minimal set is also exhaustive: subscalability holds along a geometric sequence, and for no other values of $\theta$. An example of this phenomenon is given by the St.\ Petersburg lottery. In this game, a fair coin is tossed until the first head appears; if the first head occurs on toss $m\in\mathbb N$, the payoff is $X=2^m$. Equivalently, $\prob(X=2^m)=2^{-m}$ for $m\in\mathbb N$, and the corresponding survival function is 
\begin{equation}\notag
\overline{F}(x) = 
\begin{cases}
1, & x < 2,\\
2^{- \lfloor \log_2 x \rfloor}, & x \geq 2.
\end{cases}
\end{equation}
As shown below, this distribution is $\theta$-subscalable if and only if $\theta=2^{-k}$ for some $k\in\mathbb N$. Hence, the admissible set of $\theta$ is the geometric sequence $\lbrace 2^{-k} : k \in \mathbb{N}\rbrace$, which, by Lemma~\ref{lemma:thetak}, is the minimal set generated by $\theta = \frac{1}{2}$.
\begin{lemma}\label{lemma:stpetersburg}
Let $X$ be a St.\ Petersburg lottery. Then $X$ is $\theta$-subscalable if and only if $\theta \in \{2^{-k}:k\in\mathbb N\}$.
\end{lemma}
\begin{proof}
\noindent\emph{Step 1: Sufficiency.}
Fix $k\in\mathbb N$ and set $\theta=2^{-k}$. We show that
\[
\theta\,\overline F(x)\le \overline F(x/\theta)\qquad \text{for all }x\ge 0.
\]
\noindent If $0\le x<2^{1-k}$, then $x < 2$ and $x/\theta=x2^k<2$, so $\overline F(x)=1$ and $\overline F(x/\theta)=1$. Therefore, 
\begin{equation}\notag
\theta\,\overline F(x)=\theta\le 1=\overline F(x/\theta).
\end{equation}
\noindent If $2^{1-k}\le x<2$, then $\overline F(x)=1$ and $2\le x/\theta=x2^k<2^{k+1}$, so $\bigl\lfloor \log_2(x/\theta)\bigr\rfloor\le k$, leading to
\[
\theta\,\overline F(x) = \theta = 2^{-k} \leq  2^{-\lfloor \log_2(x/\theta)\rfloor} = \overline F(x/\theta).
\]
\noindent Finally, if $x\ge 2$, choose $m\in\mathbb N$ such that $x\in[2^m,2^{m+1})$. Then
\[
\overline F(x)=2^{-m},
\qquad
x/\theta=x2^k\in[2^{m+k},2^{m+k+1}),
\]
hence
\[
\overline F(x/\theta)=2^{-(m+k)}=2^{-k}2^{-m}=\theta\,\overline F(x).
\]
\noindent Thus $X$ is $\theta$-subscalable for $\theta=2^{-k}$.

\medskip
\noindent\emph{Step 2: Necessity.}
Assume $\theta\in(0,1)$ is not of the form $2^{-k}$. Choose $k\in\mathbb N$ such that
\[
2^{-k}<\theta<2^{-k+1}.
\]
Set $x:=2^k\theta\in(1,2)$. Then $\overline F(x)=1$ while $x/\theta=2^k$, hence
$\overline F(x/\theta)=2^{-k}$. Since $\theta>2^{-k}$, we obtain
\[
\theta\,\overline F(x)=\theta>\,2^{-k}=\overline F(x/\theta),
\]
hence the inequality fails and $X$ cannot be $\theta$-subscalable.
\end{proof}
\noindent
Before moving on to the next section, we recall a stochastic dominance result for averages of St.\ Petersburg lotteries and include
a short proof using the one-basket theorem.

\begin{proposition}[\citet{chen2025diversification}]\label{proposition:stpetersburg}
Let $X$ be a St.\ Petersburg lottery, and let $\overline{X}_n = \frac{1}{n}\sum_{i=1}^n X_i$ be the average of $n\ge 2$ iid copies of $X$. If $k,\ell\in\mathbb N$ satisfy $\ell/k=2^a$ for some $a\in\mathbb N$, then
\[
\overline X_k \le_{\mathrm{st}} \overline X_\ell.
\]
\end{proposition}
\begin{proof}
By Lemma~\ref{lemma:stpetersburg}, $X$ is $\theta$-subscalable at $\theta=\tfrac12$, hence Theorem~\ref{theorem:onebasket} (applied with $n=2$ and equal weights) yields
\[
X \le_{\mathrm{st}} \overline X_2.
\]
Let $k\in\mathbb N$ and consider iid copies $X_1,\dots,X_{2k}$ of $X$. For each $i=1,\dots,k$, set
\[
Y_i:=\frac{X_i+X_{k+i}}{2}.
\]
Then $Y_1,\dots,Y_k$ are independent and each $Y_i$ has the same distribution as $\overline X_2$. Since $X_i \sim X$, it follows that $X_i \le_{\mathrm{st}} Y_i$ for every $i$. By closure of $\le_{\mathrm{st}}$ under convolution and positive scaling (Lemma~\ref{lemma:fosd}),
\[
\overline{X}_k=\frac{1}{k}\sum_{i=1}^k X_i
\le_{\mathrm{st}}
\frac{1}{k}\sum_{i=1}^k Y_i
=\frac{1}{k}\sum_{i=1}^k\frac{X_i+X_{k+i}}{2}
=\frac{1}{2k}\sum_{j=1}^{2k}X_j
=\overline{X}_{2k}.
\]
Iterating gives $\overline X_k \le_{\mathrm{st}} \overline X_{2^a k}$ for all $a\in\mathbb N$, i.e.\ whenever $\ell/k=2^a$.
\end{proof}
\begin{remark}
A natural question is whether $\overline X_k \le_{\mathrm{st}} \overline X_\ell$ can hold for pairs $(k,\ell)$ with $\ell/k$ not a power of two. \citet[Example~1]{chen2025diversification} give an explicit counterexample showing that $X \le_{\mathrm{st}} \overline X_3$ does not hold. A complete characterization of the weight vectors for which $X \le_{\mathrm{st}} \sum_{i=1}^n \theta_i X_i$ holds for iid St.\ Petersburg lotteries remains an open problem.
\end{remark}
\subsection{Completely subscalable risks}\label{subsection:completely}

A stronger condition than $\theta$-subscalability arises when the inequality $\theta \, \overline{F}(x) \leq \overline{F}(x/\theta)$ holds for all $x \geq 0$ and all $\theta \in (0,1)$. We define such risks as \emph{completely subscalable}.
\begin{definition}
A positive random variable $X$ is said to be completely subscalable if the inequality $\theta \overline{F}(x) \leq \overline{F}(x/\theta)$ holds for all $x \geq 0$ and all $\theta \in (0,1)$. We also refer to the corresponding distribution as completely subscalable.
\end{definition}
\noindent It follows directly from the definition that a completely subscalable risk is $\theta$-subscalable for all $\theta \in (0,1)$. By Lemma~\ref{lemma:infinite}, this implies that any non-trivial completely subscalable risk is extremely heavy-tailed.\\
\\
An equivalent characterization of complete subscalability is given by the following lemma.
\begin{lemma}\label{lemma:monotonicity}
A positive random variable $X \sim \overline{F}$ is completely subscalable if and only if the function $h(x) := x \, \overline{F}(x)$ is increasing on $(0,\infty)$.
\end{lemma}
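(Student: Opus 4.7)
The plan is to recognize that the inequality $\theta\,\overline{F}(x) \le \overline{F}(x/\theta)$ becomes an easily interpretable statement about $h$ after multiplying both sides by $x/\theta$. Indeed, for $x > 0$,
\begin{equation}\notag
\theta\,\overline{F}(x) \le \overline{F}(x/\theta) \quad \Longleftrightarrow \quad x\,\overline{F}(x) \le \frac{x}{\theta}\,\overline{F}(x/\theta) \quad \Longleftrightarrow \quad h(x) \le h(x/\theta).
\end{equation}
Since $\theta$ ranges over $(0,1)$, the value $x/\theta$ ranges over the interval $(x,\infty)$. So the condition of complete subscalability, quantified over $x>0$ and $\theta \in (0,1)$, is equivalent to saying that $h(x) \le h(y)$ whenever $0 < x < y$, i.e., that $h$ is increasing on $(0,\infty)$. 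This is the core observation.

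For the forward direction, I would formalize the above by taking arbitrary $0 < a < b$ and setting $\theta := a/b \in (0,1)$. Applying complete subscalability at the point $a$ with this $\theta$ yields $\theta\,\overline{F}(a) \le \overline{F}(a/\theta) = \overline{F}(b)$; multiplying through by $b$ gives $a\,\overline{F}(a) \le b\,\overline{F}(b)$, i.e., $h(a) \le h(b)$. Conversely, if $h$ is increasing on $(0,\infty)$, then for any $x > 0$ and $\theta \in (0,1)$ we have $x < x/\theta$, hence $h(x) \le h(x/\theta)$, which rearranges back to $\theta\,\overline{F}(x) \le \overline{F}(x/\theta)$.

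The only loose end is the boundary point $x = 0$, which I would dispose of in one line: at $x=0$, the inequality $\theta\,\overline{F}(0) \le \overline{F}(0)$ is immediate from $\theta \in (0,1)$ and $\overline{F}(0) \in [0,1]$, independently of any monotonicity assumption on $h$. There is no real obstacle in this proof; the main subtlety is simply noticing the multiplicative rescaling that converts the subscalability inequality into a monotonicity statement for $h$, together with the observation that the map $\theta \mapsto x/\theta$ sweeps out all of $(x,\infty)$ as $\theta$ varies in $(0,1)$, which is what makes the equivalence work in both directions.
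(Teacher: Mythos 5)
Your proof is correct and follows essentially the same route as the paper's: in both directions it uses the substitution $\theta = a/b$ (equivalently $y = x/\theta$) to convert the subscalability inequality into the monotonicity statement $h(a) \le h(b)$, and handles $x = 0$ separately. No gaps.
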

\begin{proof}
Assume first that $X$ is completely subscalable. Then for any $0 < x < y$, set $\theta := x/y \in (0,1)$. The defining inequality gives $x \, \overline{F}(x) \leq y\, \overline{F}(y)$, which shows that $h$ is increasing on $(0,\infty)$. Conversely, suppose that $h$ is increasing on $(0,\infty)$. Fix $x > 0$ and let $y := x/\theta$. Then by monotonicity of $h$, we have $\theta \, \overline{F}(x) \leq \overline{F}(x/\theta)$. Since this inequality also holds at $x = 0$, we conclude that $X$ is completely subscalable.
\end{proof}

\noindent As a consequence, we obtain the following continuity property.
\begin{lemma}\label{lemma:continuous}
Let $X \sim \overline{F}$ be completely subscalable. Then $\overline{F}$ is continuous on $(0,\infty)$.
\end{lemma}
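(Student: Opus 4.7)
The plan is to reduce the continuity statement to the monotonicity criterion just established. By the preceding lemma, complete subscalability is equivalent to the function $h(x) = x\,\overline{F}(x)$ being increasing on $(0,\infty)$. So the task becomes: show that if $h$ is increasing, then $\overline{F}$ cannot have any discontinuity on $(0,\infty)$.

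First I would recall the general shape of a survival function: $\overline{F}$ is decreasing and right-continuous, so its only possible discontinuities are left-discontinuities of the form $\overline{F}(x_0^-) > \overline{F}(x_0)$ at some $x_0 > 0$. Since $\overline{F}$ is monotone, the left limit $\overline{F}(x_0^-)$ exists. The heart of the argument is then a short contradiction: assuming such a jump, I would compute
\begin{equation}\notag
\lim_{x \uparrow x_0} h(x) \;=\; x_0 \, \overline{F}(x_0^-) \;>\; x_0 \, \overline{F}(x_0) \;=\; h(x_0),
\end{equation}
so $h$ would admit a strictly downward left-jump at $x_0$, incompatible with $h$ being increasing.

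To finish, I would combine this with the right-continuity of $\overline{F}$ (which is automatic for any survival function) to conclude that $\overline{F}$ is continuous at every $x_0 \in (0,\infty)$. No step here is a real obstacle; the only thing to be a little careful about is articulating that any failure of continuity for a decreasing, right-continuous function must be a strict left-jump, so that the monotonicity of $h$ is genuinely violated rather than merely weakly challenged.
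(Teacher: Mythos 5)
Your proof is correct and follows essentially the same route as the paper: both reduce the claim to the monotonicity of $h(x) = x\,\overline{F}(x)$ from the preceding lemma and observe that a downward jump of $\overline{F}$ at some $x_0 > 0$ would force a strict downward jump of $h$, contradicting that $h$ is increasing. Your version merely spells out the right-continuity bookkeeping that the paper leaves implicit.
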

\begin{proof}
If $\overline{F}$ had a downward jump at any $x > 0$, then $h(x) = x \, \overline{F}(x)$ would jump down as well, contradicting the monotonicity property established in Lemma~\ref{lemma:monotonicity}. Hence $\overline{F}$ must be continuous on $(0,\infty)$.
\end{proof}

\noindent The next lemma establishes a useful closure property.
\begin{lemma}\label{lemma:closuresub}
Let $X \sim \overline{F}$ be completely subscalable and let $g : [0,\infty) \to [0,\infty)$ be an increasing convex function with $g(0) = 0$. Then the random variable $Y := g(X)$ is also completely subscalable.
\end{lemma}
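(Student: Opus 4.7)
The plan is to leverage Lemma~\ref{lemma:monotonicity}, which characterises complete subscalability of a positive random variable $Z \sim \overline{H}$ as the monotonicity of $z \mapsto z \overline{H}(z)$ on $(0,\infty)$. By hypothesis, $h_X(x) := x \overline{F}(x)$ is non-decreasing on $(0,\infty)$, and my goal is to establish the same monotonicity for $h_Y(y) := y \overline{G}(y)$, where $\overline{G}$ denotes the survival function of $Y = g(X)$. Complete subscalability of $Y$ then follows from the same lemma.

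The key auxiliary fact I will use about $g$ is the star-shaped inequality: convexity combined with $g(0) = 0$ implies that $x \mapsto g(x)/x$ is non-decreasing on $(0,\infty)$. This follows from the standard chord-slope argument: for $0 < x_1 < x_2$, writing $x_1 = (x_1/x_2)\, x_2 + (1 - x_1/x_2) \cdot 0$ and applying the convexity inequality together with $g(0) = 0$ yields $g(x_1) \leq (x_1/x_2)\, g(x_2)$, that is, $g(x_1)/x_1 \leq g(x_2)/x_2$.

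Assuming first that $g$ is continuous and strictly increasing, the inverse $g^{-1}$ is well-defined and $\overline{G}(y) = \overline{F}(g^{-1}(y))$. Substituting $y = g(x)$ (with $x$ ranging over $(0,\infty)$, since such $g$ is automatically unbounded under our hypotheses) gives
\[
h_Y(g(x)) \;=\; g(x)\, \overline{F}(x) \;=\; \frac{g(x)}{x} \cdot h_X(x) \qquad \text{for all } x > 0.
\]
Both factors on the right are non-negative and non-decreasing in $x$ — the first by the star-shapedness just established, the second by the hypothesis on $X$ — so their product is non-decreasing in $x$. Since $g$ is strictly increasing, this directly translates to $h_Y$ being non-decreasing in $y$, and Lemma~\ref{lemma:monotonicity} concludes the argument in this case.

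The main obstacle I anticipate is extending the argument to increasing convex $g$ that need not be strictly increasing or continuous. My plan for this is a perturbation argument: for $\varepsilon > 0$, the function $g_\varepsilon(x) := g(x) + \varepsilon x$ is convex, strictly increasing, and continuous, with $g_\varepsilon(0) = 0$, so the previous paragraph applies and $Y_\varepsilon := g_\varepsilon(X) = Y + \varepsilon X$ is completely subscalable. As $\varepsilon \downarrow 0$, the survival functions $\overline{G}_\varepsilon$ converge to $\overline{G}$ at every continuity point of $\overline{G}$, and the inequality $\theta\, \overline{G}_\varepsilon(y) \leq \overline{G}_\varepsilon(y/\theta)$ passes to the limit on the dense set of continuity points; extending to all $y \geq 0$ is then a routine consequence of the right-continuity of survival functions.
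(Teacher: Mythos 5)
Your proof is correct, but it takes a genuinely different route from the paper's. The paper argues directly on survival functions in two lines: from convexity and $g(0)=0$ it gets $g(\theta x) \leq \theta g(x)$, hence $\overline{G}(x/\theta) = \prob(\theta g(X) > x) \geq \prob(g(\theta X) > x) = \overline{F}(g^{-1}(x)/\theta) \geq \theta\,\overline{F}(g^{-1}(x)) = \theta\,\overline{G}(x)$, applying the subscalability of $X$ at the single point $g^{-1}(x)$. You instead route everything through the monotonicity characterization of Lemma~\ref{lemma:monotonicity}, observing that $h_Y \circ g = (g(x)/x)\cdot h_X(x)$ is a product of two non-negative increasing functions (star-shapedness of $g$ being the same inequality $g(\theta x)\leq \theta g(x)$ in disguise). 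Both arguments are sound; the paper's is shorter, while yours makes the structural reason visible and, notably, is more careful about degenerate $g$: the paper tacitly writes $g^{-1}$, which is only legitimate for strictly increasing $g$, whereas your perturbation $g_\varepsilon(x) = g(x) + \varepsilon x$ followed by passage to the limit (via $\lim_{\varepsilon\downarrow 0}\overline{G}_\varepsilon(y) = \prob(Y\geq y)$, the co-countable set of continuity points, and right-continuity) handles the general case rigorously. Two minor remarks: continuity of $g$ is automatic (a convex increasing function on $[0,\infty)$ with $g(0)=0$ is continuous, including at the origin), so the perturbation is needed only for strict monotonicity; and the limit step could be avoided entirely by replacing $g^{-1}$ with the generalized inverse $g^{-}(y) := \inf\{x \geq 0 : g(x) > y\}$, for which $\{g(X) > y\} = \{X > g^{-}(y)\}$ still holds, making the paper's direct computation valid verbatim for all increasing convex $g$ anchored at zero.
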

\begin{proof}
Fix $x\ge 0$ and $\theta\in(0,1)$. Let $\overline{G}(u):=\mathbb{P}(Y>u)$ for $u\ge0$ and define the generalized inverse
\[
g^{\leftarrow}(u):=\sup\{t\ge 0:\ g(t)\le u\}, \qquad u\ge 0.
\]
If $g^{\leftarrow}(x)= \infty$ (which can only happen if $g\equiv0$, since $g$ is increasing convex with $g(0)=0$), then $\overline G(x)=\mathbb P(g(X)>x)=0$ and the desired inequality $\theta\,\overline G(x)\le \overline G(x/\theta)$ holds trivially. Thus assume $g^{\leftarrow}(x)<\infty$.\\
\\
Since $g$ is convex and finite-valued on $[0,\infty)$, it is continuous. Thus $\{g(X)>x\}=\{X>g^{\leftarrow}(x)\}$ and $\{g(\theta X)>x\}=\{X>g^{\leftarrow}(x)/\theta \}$, leading to
\[
\overline{G}(x)=\overline{F}\bigl(g^{\leftarrow}(x)\bigr), 
\qquad
\mathbb{P}\bigl(g(\theta X)>x\bigr)=\overline{F}\bigl(g^{\leftarrow}(x)/\theta\bigr).
\]
By convexity and $g(0)=0$, $g(\theta t)\le \theta g(t)$ for all $t\ge 0$, so
\[
\overline{G}(x/\theta)
=\mathbb{P}\bigl(\theta g(X)>x\bigr)
\ge \mathbb{P}\bigl(g(\theta X)>x\bigr)
= \overline{F}\bigl(g^{\leftarrow}(x)/\theta\bigr).
\]
Since $g^{\leftarrow}(x) \geq 0$, complete subscalability of $X$ yields
\[
\overline{F}\bigl(g^{\leftarrow}(x)/\theta\bigr)\ge \theta\,\overline{F}\bigl(g^{\leftarrow}(x)\bigr).
\]
Combining the above relations, we obtain $\theta\,\overline{G}(x)\le \overline{G}(x/\theta)$, so $Y$ is completely subscalable.
\end{proof}
\noindent We now relate the class of completely subscalable risks to the super-Fr{\'e}chet family introduced by \citet{chen2024stochastic}, under which their stochastic dominance result was derived. As mentioned earlier, the authors proved a broader stochastic dominance result involving the generalized $r$-mean and the weighted average of NLOD super-Fr{\'e}chet risks with possibly different distributions but common essential infimum of $0$. A special case of this result, corresponding to $r = 1$, yields the first-order stochastic dominance relation:
\begin{equation}\notag
I_1 X_1 + \cdots + I_n X_n \leq_{\text{st}} \theta_1 X_1 + \cdots + \theta_n X_n,
\end{equation}
as in the one-basket theorem.
\begin{definition}
A random variable $Y$ is said to be super-Fr{\'e}chet if it can be written as $Y = g^*(X)$, where $X \sim \text{Fr{\'e}chet}(1)$ and $g^* : [0,\infty) \to [0,\infty)$ is a strictly increasing convex function with $g^*(0) = 0$.
\end{definition}
\noindent For convenience, we denote the classes of super-Fr{\'e}chet and completely subscalable risks by $\mathcal{S}_{\mathcal{F}}$ and $\mathcal{CS}$, respectively.\\
\\
The next lemma shows that the Fr{\'e}chet$(1)$ distribution belongs to $\mathcal{CS}$.
\begin{lemma}
\textnormal{Fr{\'e}chet}$(1) \in \mathcal{CS}$.
\end{lemma}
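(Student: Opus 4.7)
The plan is to invoke the monotonicity characterization of complete subscalability established in the preceding lemma, which reduces the task to showing that $h(x) := x\,\overline{F}(x)$ is increasing on $(0,\infty)$. Recall that the Fr{\'e}chet$(1)$ distribution has cumulative distribution function $F(x) = e^{-1/x}$ on $(0,\infty)$, so $\overline{F}(x) = 1 - e^{-1/x}$ and $h(x) = x\bigl(1 - e^{-1/x}\bigr)$.

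Differentiating gives
\begin{equation}\notag
h'(x) = \bigl(1 - e^{-1/x}\bigr) - \tfrac{1}{x}\, e^{-1/x},
\end{equation}
and the claim $h'(x) \geq 0$ rearranges to $e^{1/x} \geq 1 + \tfrac{1}{x}$, which is the standard inequality $e^{u} \geq 1 + u$ evaluated at $u = 1/x > 0$. By Lemma~\ref{lemma:monotonicity}, this proves the result.

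Alternatively, one can verify the defining inequality $\theta\,\overline{F}(x) \leq \overline{F}(x/\theta)$ directly: setting $a = 1/x$, it becomes $\theta\bigl(1 - e^{-a}\bigr) \leq 1 - e^{-\theta a}$, which rearranges to $e^{-\theta a} \leq (1-\theta) + \theta\, e^{-a}$. This is precisely the convexity of $t \mapsto e^{-ta}$ applied at the convex combination $\theta = (1-\theta)\cdot 0 + \theta \cdot 1$; the boundary case $x = 0$ is trivial since $\overline{F}(0) = 1$. I do not anticipate any real obstacle in either approach, as the result collapses to a one-line elementary inequality.
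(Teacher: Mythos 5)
Your proposal is correct on both routes. The ``alternative'' verification is essentially the paper's own proof: the paper fixes $\theta$ and uses concavity of $z \mapsto 1 - e^{-z}$ to get $\theta(1-e^{-1/x}) \leq 1 - e^{-\theta/x}$, which is exactly your convexity-of-$t \mapsto e^{-ta}$ argument at the combination $\theta\cdot 1 + (1-\theta)\cdot 0$, with the same trivial treatment of $x=0$. Your primary route, via Lemma~\ref{lemma:monotonicity}, is a genuinely different (and legitimate) path: it replaces the pointwise check over all $\theta$ by a single monotonicity statement for $h(x) = x\bigl(1-e^{-1/x}\bigr)$, and the derivative computation $h'(x) = \bigl(1-e^{-1/x}\bigr) - \tfrac{1}{x}e^{-1/x} \geq 0 \iff e^{1/x} \geq 1 + \tfrac{1}{x}$ is correct. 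What this buys is a cleaner one-variable criterion (no quantifier over $\theta$), at the cost of a differentiation step; note that the underlying elementary fact is the same in both cases, since $e^{u} \geq 1+u$ is precisely the first-order convexity inequality for the exponential. Either argument is complete as written.
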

\begin{proof}
Let $X \sim \text{Fr{\'e}chet}(1)$, whose survival function is $\overline{F}(x) = 1-e^{-1/x}$ for $x > 0$, with $\overline{F}(0) := 1$. Fix $\theta \in (0,1)$. Since the function $z \mapsto 1 - e^{-z}$ is concave on $(0,\infty)$, we have
\begin{equation}\notag
\theta (1 - e^{-1/x}) \leq 1 - e^{-\theta/x} \quad \Longrightarrow \quad  \theta \, \overline{F}(x) \leq \overline{F}(x/\theta),
\end{equation}
which verifies the inequality defining complete subscalability for all $x > 0$. Given that $\overline{F}(0) = 1$, the inequality also holds at $x = 0$. Hence the Fr{\'e}chet$(1)$ distribution is completely subscalable.
\end{proof}

\noindent This yields the following relationship between the two classes:
\begin{lemma}\label{lemma:frechet}
$\mathcal{S}_{\mathcal{F}} \subsetneq \mathcal{CS}$.
\end{lemma}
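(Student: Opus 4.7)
The plan is to split $\mathcal{S}_F \subset \mathcal{CS}$ into the inclusion $\mathcal{S}_F \subseteq \mathcal{CS}$ and its strictness, and to handle them separately. The inclusion is an immediate corollary of the two preceding lemmas (Fr{\'e}chet$(1) \in \mathcal{CS}$ and the closure of $\mathcal{CS}$ under increasing convex transformations anchored at $0$); strictness requires exhibiting an explicit element of $\mathcal{CS} \setminus \mathcal{S}_F$.

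For the inclusion, I would take any $Y \in \mathcal{S}_F$, written as $Y = g^*(X)$ with $X \sim \text{Fr{\'e}chet}(1)$ and $g^*$ strictly increasing convex with $g^*(0) = 0$. The preceding lemma tells us that $X \in \mathcal{CS}$, and since a strictly increasing convex function with $g^*(0) = 0$ is in particular an increasing convex function vanishing at $0$, Lemma~\ref{lemma:closuresub} applies directly and yields $Y = g^*(X) \in \mathcal{CS}$.

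For strictness, a natural witness is a Pareto distribution with shape parameter $\alpha \in (0,1]$ and scale parameter $\rho > 0$: Example~\ref{ex:noniid:pareto} already confirmed that such a distribution belongs to $\mathcal{CS}$. To rule out membership in $\mathcal{S}_F$, I would observe that any $Y = g^*(X) \in \mathcal{S}_F$ must have essential infimum $0$: indeed, $X \sim \text{Fr{\'e}chet}(1)$ has essential infimum $0$, and $g^*$ is right-continuous at $0$, since convexity on $[0,\infty)$ gives $\lim_{x \to 0^+} g^*(x) \leq g^*(0) = 0$ while strict monotonicity with $g^*(0) = 0$ forces $g^*(x) > 0$ for $x > 0$, so the limit must equal $0$. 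Since the Pareto distribution in question has essential infimum $\rho > 0$, it cannot belong to $\mathcal{S}_F$. The main obstacle is precisely this essential-infimum argument; once it is in place, the rest is bookkeeping.
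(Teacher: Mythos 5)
Your proof of the inclusion $\mathcal{S}_F \subseteq \mathcal{CS}$ is exactly the paper's: combine the lemma that Fr\'echet$(1) \in \mathcal{CS}$ with the closure of $\mathcal{CS}$ under increasing convex transformations anchored at zero (Lemma~\ref{lemma:closuresub}), noting that a strictly increasing convex $g^*$ with $g^*(0)=0$ qualifies. Where you diverge is the strictness witness. The paper takes $Y = g(X)$ with $X \sim \text{Fr\'echet}(1)$ and $g$ convex, increasing \emph{but not strictly}, anchored at zero, and simply asserts that such a $Y$ is completely subscalable but not super-Fr\'echet; it does not explain why no \emph{other} representation $Y \sim g^*(X')$ with $g^*$ strictly increasing could exist. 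You instead take an infinite-mean Pareto with scale $\rho > 0$, which Example~\ref{ex:noniid:pareto} shows satisfies condition~\eqref{eq:onebasket:condition} for every $\theta \in (0,1)$ and every $x \geq 0$, i.e.\ is completely subscalable, and you rule out membership in $\mathcal{S}_F$ via an essential-infimum obstruction: any super-Fr\'echet variable has essential infimum $0$, because the Fr\'echet$(1)$ does and $g^*$ is forced to be right-continuous at $0$ with $g^*(0)=0$ (your convexity bound $g^*(\lambda x) \leq \lambda g^*(x)$ together with strict positivity of $g^*$ on $(0,\infty)$ pins the limit at $0$). This is a complete and fully justified argument, and arguably tighter than the paper's, whose ``not super-Fr\'echet'' claim is left implicit; your route also has the side benefit of making explicit that $\mathcal{CS}$ contains distributions with strictly positive essential infimum, which is precisely the feature the paper emphasizes as distinguishing its framework from that of the super-Fr\'echet results.
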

\begin{proof}
It follows from the two previous lemmas that $\mathcal{S}_{\mathcal{F}} \subseteq \mathcal{CS}$. This inclusion is strict: for instance, if $X \sim \text{Fr{\'e}chet}(1)$ and $g$ is convex, anchored at zero, and increasing but not strictly, then $Y = g(X)$ is completely subscalable, but not super-Fr{\'e}chet.
\end{proof}

\noindent Many classical heavy-tailed distributions belong to $\mathcal{S}_{\mathcal{F}}$. For example, the Burr, log-logistic, Lomax and paralogistic distributions with infinite mean are shown to be super-Fr{\'e}chet \citep[Example~4]{chen2024stochastic}. It follows from Lemma~\ref{lemma:frechet} that they are also completely subscalable.

\subsection{On the iid case}\label{subsection:iid}
As previously noted, when the risks are identically distributed, the concentrated portfolio has the same distribution as any of the marginal risks, and the dominance relation from the one-basket theorem simplifies to
\begin{equation}\label{eq:fosd:iid}
X \leq_{\text{st}} \theta_1 X_1 + \dots + \theta_n X_n.
\end{equation}
In the iid setting, if a random variable $X$ satisfies this relation, then so does any increasing convex transformation of $X$, say $f(X)$. That property follows from the preservation of first-order stochastic dominance under increasing transformations (Lemma~\ref{lemma:fosd}), together with Jensen's inequality:
\begin{equation}\notag
f(X) \leq_{\text{st}} f(\theta_1 X_1 + \dots + \theta_n X_n) \leq \theta_1 f(X_1) + \dots + \theta_n f(X_n).
\end{equation}
The implication above is a classical consequence of Jensen's inequality, which we state here for a fixed weight vector, in line with the setting of the one-basket theorem: 
\begin{lemma}\label{lemma:icx-transfer}
If relation~\eqref{eq:fosd:iid} holds for some weight vector $\boldsymbol{\theta} \in \Delta_n$, then we also have
\begin{equation}\notag
f(X) \leq_{\text{st}} \theta_1 f(X_1) + \dots + \theta_n f(X_n),
\end{equation}
for any increasing convex function $f$.
\end{lemma}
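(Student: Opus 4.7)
The proof proposal is essentially to formalize the chain of inequalities already sketched in the paragraph preceding the lemma statement, combining the closure of first-order stochastic dominance under increasing transformations with Jensen's inequality.

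First, I would start from the assumed relation $X \leq_{\text{st}} \theta_1 X_1 + \dots + \theta_n X_n$. Since $f$ is increasing, Lemma~\ref{lemma:fosd} (closure under increasing transformations) immediately yields
\begin{equation}\notag
f(X) \leq_{\text{st}} f(\theta_1 X_1 + \dots + \theta_n X_n).
\end{equation}
Next, since $f$ is convex and $\boldsymbol{\theta} \in \Delta_n$, Jensen's inequality applied pathwise gives
\begin{equation}\notag
f(\theta_1 X_1 + \dots + \theta_n X_n) \leq \theta_1 f(X_1) + \dots + \theta_n f(X_n) \qquad \text{almost surely.}
\end{equation}

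The only subtlety is moving from this almost-sure inequality to a stochastic dominance statement, but this is immediate: if $U \leq V$ almost surely, then $\prob(U > x) \leq \prob(V > x)$ for every $x$, so $U \leq_{\text{st}} V$. Applying this to the Jensen inequality above yields
\begin{equation}\notag
f(\theta_1 X_1 + \dots + \theta_n X_n) \leq_{\text{st}} \theta_1 f(X_1) + \dots + \theta_n f(X_n).
\end{equation}
Combining the two stochastic dominance relations by transitivity of $\leq_{\text{st}}$ gives the claim. There is no real obstacle here; the argument is a three-line consequence of Jensen's inequality and the basic properties of $\leq_{\text{st}}$, which is why the paragraph preceding the statement already contains the essence of the proof. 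The only point worth stating explicitly in the write-up is that the pathwise Jensen inequality translates without loss into a stochastic dominance statement, so that transitivity can be invoked cleanly.
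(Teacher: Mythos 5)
Your proof is correct and follows exactly the argument the paper itself gives: closure of $\leq_{\text{st}}$ under the increasing map $f$, followed by the pathwise Jensen inequality $f(\sum_i \theta_i X_i) \leq \sum_i \theta_i f(X_i)$, and transitivity. Your only addition is to spell out that an almost-sure inequality implies stochastic dominance, which the paper leaves implicit.
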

\noindent As already mentioned, the iid case received particular attention in earlier work, under the stronger requirement that the stochastic dominance relation holds for all admissible weight vectors. Following \citet[Definition~2.1]{muller2024some}, let $\mathcal D^{-}$ denote the class of survival functions $\overline{F}$ such that
\[
X \le_{\mathrm{st}} \sum_{i=1}^n \theta_i X_i
\qquad \text{for every } n \ge 2 \text{ and all } \boldsymbol{\theta}\in\Delta_n,
\]
whenever $X_1,\dots,X_n$ are iid copies of $X\sim \overline{F}$. With this notation, the goal in the above line of work is to identify explicit subclasses of $\mathcal D^{-}$.\\
\\
Early contributions include \citep{ibragimov2005new} for one-sided stable laws with infinite mean and \citep{chen2025unexpected} for super-Pareto risks, with further extensions to the $\mathcal H$-family in \citep{chen2024stochastic}. More recently, \citet{muller2024some} identified the super-Cauchy class, and \citet{arab2024convex} introduced the InvSub family, as additional explicit subclasses of $\mathcal D^{-}$. In the formulation of \citet{arab2024convex}, InvSub is defined under the condition \(F(0)=0\), so distributions with an atom at \(0\) are excluded. This restriction is removed in \citep{chen2025stochastic}, where the corresponding extension is denoted by $\mathcal H^*$. To maintain continuity with the InvSub terminology, we refer to this class as $\mathrm{InvSub}^*$.\\
\\
To date, the super-Cauchy and $\mathrm{InvSub}^*$ families are the largest explicit subclasses of $\mathcal D^{-}$ currently available. The two classes overlap but are not nested, as illustrated by examples in \citep[Section~4]{muller2024some} and \citep[Section~4.3]{arab2024convex}. A super-Cauchy risk is a random variable that can be written as $\varphi(Y)$, where $Y$ is standard Cauchy and $\varphi$ is convex. Importantly, $\varphi$ is not required to be increasing, anchored at zero, or even positive. As a result, the family includes random variables with support beyond the positive reals, such as the Cauchy distribution itself. On the other hand, the $\mathrm{InvSub}^*$ class can be expressed as the set of survival functions $\overline{F}$ with $\overline F(x)=1$ for $x<0$ such that
\[
\overline{F}(x/\theta) + \overline{F}(x/(1-\theta)) \geq \overline{F}(x)
\qquad \text{for all } x > 0 \text{ and all } \theta \in (0,1).
\]
With this formulation, it is straightforward to verify that $\mathcal{CS} \subseteq \mathrm{InvSub}^*$. Indeed, complete subscalability gives $\overline F(x/\theta)\ge \theta\,\overline F(x)$ and $\overline F(x/(1-\theta))\ge (1-\theta)\,\overline F(x)$ for all $x > 0$ and all $\theta \in (0,1)$, and summing yields the defining inequality above.  The inclusion is in fact strict, as shown in \citep[Example~3.4]{zeng2025further}. That paper also provides a systematic overview of the relationships between the principal distributional families studied in connection with diversification failures.

\section{The one-basket theorem as a boundary case}\label{section:beyond}

The one-basket theorem identifies sufficient conditions under which diversification increases risk in the sense of first-order stochastic dominance. This result contradicts classical intuition and may appear anomalous at first glance. However, in this section, we show that rather than being a pathological outlier, it arises as the boundary case of a general phenomenon: diversification always increases risk locally near the origin, and this effect becomes global when the conditions of the one-basket theorem are met.\\
\\
We proceed in three steps. First, we prove that every positive risk exhibits subscalability over a non-trivial interval near the origin. Next, we demonstrate that subscalability in this region gives rise to what can be seen as a local version of the one-basket theorem, valid in all cases near zero.  To measure how far this effect is guaranteed, we introduce the threshold function $t(\boldsymbol{\theta})$. Finally, we consider the case where $t(\boldsymbol{\theta})$ is infinite: in this setting, the effect extends globally, yielding the one-basket theorem.

\subsection{Subscalability near the origin}

We introduced the notion of subscalability in Section~\ref{subsection:subscalability}, to describe the situation in which scaling down a risk $X \sim \overline{F}$ by a factor $\theta \in (0,1)$ reduces the probability of exceeding a given threshold $x$ less than proportionally:
\begin{equation}\notag
\theta \, \prob(X > x) \leq \prob(\theta X > x) \quad \Longleftrightarrow \quad \theta \, \overline{F}(x) \leq \overline{F}(x/\theta).
\end{equation}
In the one-basket theorem, this inequality is required to hold globally --- that is, for all $x \geq 0$, across all risks and all relevant subset weights.\\
\\
Such a global requirement is strong and fails in many cases. However, we now show that the inequality holds locally near the origin, for any individual risk and for any $\theta \in (0,1)$.\\
\\
This simply follows from the right-continuity of survival functions: the inequality is trivially satisfied at $x = 0$, and by right-continuity of $\overline{F}$, it continues to hold on some interval $[0,\varepsilon)$ with $\varepsilon > 0$.\\
\\
To formalize it, we define for each risk $X_i$ and each $\theta \in (0,1)$ the threshold
\begin{equation}\notag
t_i(\theta) := \sup \lbrace t \geq 0 : [0,t) \subseteq r_i(\theta) \rbrace,
\end{equation}
where
\begin{equation}\notag
r_i(\theta) = \left\lbrace x \geq 0 : \theta \, \overline{F}_i(x) \leq \overline{F}_i(x / \theta) \right\rbrace.
\end{equation}
With this, the preceding argument leads to:
\begin{lemma}\label{lemma:ti}
Let $X_i \sim \overline{F}_i$ be a positive random variable. Then $t_i(\theta) > 0$ for any $\theta \in (0,1)$.
\end{lemma}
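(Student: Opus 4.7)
The plan is to show that the defining inequality $\theta\,\overline{F}_i(x) \leq \overline{F}_i(x/\theta)$ holds trivially at $x=0$ and then invoke the right-continuity of survival functions to extend it to a neighborhood of the origin. Introduce the auxiliary function $g(x) := \overline{F}_i(x/\theta) - \theta\,\overline{F}_i(x)$. At $x = 0$, one immediately gets $g(0) = (1-\theta)\,\overline{F}_i(0) \geq 0$, which is strictly positive precisely when $\overline{F}_i(0) > 0$, i.e., when $X_i$ is not almost surely equal to $0$.

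Next, since survival functions are right-continuous everywhere on $\mathbb{R}$, both $x \mapsto \overline{F}_i(x)$ and $x \mapsto \overline{F}_i(x/\theta)$ are right-continuous at $0$, and hence so is $g$. In the non-trivial case $\overline{F}_i(0) > 0$, combining the strict positivity $g(0) > 0$ with right-continuity produces some $\varepsilon > 0$ such that $g(x) > 0$ for all $x \in [0,\varepsilon]$, which yields $t_i(\theta) \geq \varepsilon > 0$. In the degenerate case where $X_i = 0$ almost surely, we have $\overline{F}_i \equiv 0$ on $[0,\infty)$ and the inequality holds trivially for every $x \geq 0$, so that $t_i(\theta) = +\infty$. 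Either way, $t_i(\theta) > 0$.

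There is no real obstacle in this argument: the inequality slacks by a factor of $(1-\theta)$ at the origin, and right-continuity does the rest. The only mild care needed is to separate out the possibly degenerate case $\overline{F}_i(0) = 0$, which is immediate. The lemma thus plays a foundational rather than technical role, anchoring the broader point of the section that subscalability is automatically guaranteed over a non-trivial interval near zero, regardless of the global behavior of $\overline{F}_i$.
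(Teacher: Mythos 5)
Your proof is correct and takes essentially the same route as the paper, which justifies the lemma in a single sentence: the inequality $\theta\,\overline{F}_i(x)\leq\overline{F}_i(x/\theta)$ holds trivially at $x=0$ and right-continuity of the survival function propagates it to an interval $[0,\varepsilon)$. If anything, your explicit split into the cases $\overline{F}_i(0)>0$ (where the slack $(1-\theta)\overline{F}_i(0)$ is strictly positive, so right-continuity genuinely applies) and $X_i=0$ a.s.\ (where the inequality is an identity) is slightly more careful than the paper's informal argument.
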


\subsection{A local version of the one-basket theorem}\label{subsection:zeroT}
While the one-basket theorem establishes first-order stochastic dominance under specific conditions, we now show that a local version of this result always holds: the diversified portfolio is more likely to exceed small thresholds than the concentrated one, regardless of whether the conditions of the theorem are satisfied.\\
\\
To see this, recall that by Theorem~\ref{theorem:lowerbound}, given any weight vector $\boldsymbol{\theta} \in \Delta_n$, the inequality 
\begin{equation}\label{ineq:localsum}
\prob\left( \sum_{i=1}^n \theta_i X_i > x \right) \geq \sum_{i=1}^n \theta_i \, \prob(X_i > x)
\end{equation}
holds for all $x$ in the region $\mathcal{R}(\boldsymbol{\theta})$, defined as
\begin{equation}\notag
\mathcal{R}(\boldsymbol{\theta}) = \bigcap_{i=1}^n \bigcap_{\lbrace i \rbrace \subseteq \mu \subsetneq [n]} r_i(\theta_{\mu}).
\end{equation}
\noindent By construction, each region $r_i(\theta_{\mu})$ contains the interval $[0,t_i(\theta_{\mu}))$. Therefore,
\begin{equation}\notag
\mathcal{R}(\boldsymbol{\theta}) \supseteq [0,t(\boldsymbol{\theta})), \quad \text{where} \quad 
t(\boldsymbol{\theta}) := \min_{i = 1}^n \min_{\lbrace i \rbrace \subseteq \mu \subsetneq [n]} t_i(\theta_{\mu}).
\end{equation}
Moreover, since each $t_i(\theta_{\mu})$ is strictly positive by Lemma~\ref{lemma:ti}, we have
\begin{equation}\notag
t(\boldsymbol{\theta}) > 0.
\end{equation}
Thus, inequality~\eqref{ineq:localsum} holds for all $x \in [0,t(\boldsymbol{\theta}))$, and since the right-hand side equals the survival function of the concentrated portfolio,
\begin{equation}\notag
\prob\left( \sum_{i=1}^n I_i X_i > x \right) = \sum_{i=1}^n \theta_i \, \prob(X_i > x),
\end{equation}
we obtain the following proposition:
\begin{proposition}\label{proposition:zerot}
Let $X_1, \dots, X_n$ be $n \geq 2$ independent positive random variables, with survival functions $\overline{F}_1, \dots, \overline{F}_n$. Given a weight vector $\boldsymbol{\theta} \in \Delta_n$, let $(I_1, \dots, I_n) \sim \text{Categorical}(\boldsymbol{\theta})$, independent of $(X_1, \dots, X_n)$. Then 
\begin{equation}\label{ineq:prop}
\prob\left( \sum_{i=1}^n I_i X_i > x \right) \leq \prob\left( \sum_{i=1}^n \theta_i X_i > x \right)
\end{equation}
holds for all $x \in [0,t(\boldsymbol{\theta}))$.
\end{proposition}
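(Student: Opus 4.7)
The plan is to combine three ingredients that are already in place: the strict positivity of each $t_i(\theta_\mu)$ from Lemma~\ref{lemma:ti}, the containment of a half-open interval inside each $r_i(\theta_\mu)$ built into the definition of $t_i$, and the lower bound on the survival function of the diversified portfolio from Theorem~\ref{theorem:lowerbound}. Once these are assembled, the result falls out as a routine bookkeeping exercise.

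First I would argue that $t(\boldsymbol{\theta}) > 0$. Since $\boldsymbol{\theta} \in \Delta_n$, every subset weight $\theta_\mu$ with $\emptyset \neq \mu \subset [n]$ lies in the open interval $(0,1)$, and there are only finitely many such $\mu$. Lemma~\ref{lemma:ti} therefore gives $t_i(\theta_\mu) > 0$ for every relevant pair $(i, \mu)$, and the minimum of a finite collection of strictly positive numbers is itself strictly positive.

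Next I would verify that $\mathcal{R}(\boldsymbol{\theta}) \supseteq [0, t(\boldsymbol{\theta}))$. Fix a pair $(i, \mu)$ with $\{i\} \subseteq \mu \subset [n]$. From the definition of $t_i(\theta_\mu)$ as a supremum, together with the right-continuity of $\overline{F}_i$ underpinning Lemma~\ref{lemma:ti}, the inequality $\theta_\mu \, \overline{F}_i(x) \leq \overline{F}_i(x/\theta_\mu)$ holds for every $x \in [0, t_i(\theta_\mu))$. Hence $[0, t_i(\theta_\mu)) \subseteq r_i(\theta_\mu)$, and since $t(\boldsymbol{\theta}) \leq t_i(\theta_\mu)$ by construction, also $[0, t(\boldsymbol{\theta})) \subseteq r_i(\theta_\mu)$. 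Intersecting over all admissible pairs $(i, \mu)$ yields the desired containment in $\mathcal{R}(\boldsymbol{\theta})$.

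Finally I would invoke Theorem~\ref{theorem:lowerbound}, which on $\mathcal{R}(\boldsymbol{\theta})$ delivers
\[
\prob\Bigl(\sum_{i=1}^n \theta_i X_i > x\Bigr) \geq \sum_{i=1}^n \theta_i \, \overline{F}_i(x),
\]
and hence the same inequality on $[0, t(\boldsymbol{\theta}))$. Recognising the right-hand side as $\prob(\sum_{i=1}^n I_i X_i > x)$, the identity recalled just above the statement of the proposition, gives inequality~\eqref{ineq:prop}. I do not expect any genuine obstacle here; the only mildly delicate point is justifying that the supremum defining $t_i(\theta_\mu)$ is actually attained from the left on a half-open interval, which rests on the right-continuity of survival functions and is precisely the content of Lemma~\ref{lemma:ti}.
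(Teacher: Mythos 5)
Your proposal is correct and follows essentially the same route as the paper: establish $t(\boldsymbol{\theta})>0$ as a finite minimum of the strictly positive $t_i(\theta_\mu)$ from Lemma~\ref{lemma:ti}, note that $[0,t_i(\theta_\mu))\subseteq r_i(\theta_\mu)$ by the definition of $t_i$ so that $[0,t(\boldsymbol{\theta}))\subseteq\mathcal{R}(\boldsymbol{\theta})$, and then apply Theorem~\ref{theorem:lowerbound} together with the identity $\prob(\sum_i I_iX_i>x)=\sum_i\theta_i\overline{F}_i(x)$. The only cosmetic remark is that right-continuity of $\overline{F}_i$ is needed only for the positivity claim (i.e.\ inside Lemma~\ref{lemma:ti}), not for the interval containment, which follows directly from the supremum definition.
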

\noindent In line with the initial claim, this result establishes a local version of the one-basket theorem. Indeed, recall that the theorem asserts the stochastic dominance relation
\begin{equation}\notag
I_1 X_1 + \cdots + I_n X_n \leq_{\text{st}} \theta_1 X_1 + \cdots + \theta_n X_n,  
\end{equation}
which, by definition, is equivalent to the inequality~\eqref{ineq:prop} holding for all $x \geq 0$. In other words, when the theorem applies, the diversified portfolio is more likely than the concentrated one to exceed any given threshold. Proposition~\ref{proposition:zerot} shows that this behavior is always guaranteed locally: the inequality holds on a non-trivial interval $[0,t(\boldsymbol{\theta}))$, meaning that the diversified portfolio is more likely to exceed sufficiently small thresholds --- even when global stochastic dominance fails.
\begin{remark}\label{rem:local-strictness}
Proposition~\ref{proposition:zerot} is stated only with a non-strict inequality, because it is meant as the exact local counterpart of Theorem~\ref{theorem:onebasket}. The threshold \(t(\boldsymbol{\theta})\) naturally arises from the proof of the one-basket theorem and guarantees the inequality on an initial interval, but it is not designed to capture where strict inequality begins. A sharper local result, ensuring strict inequality at least at one point of an initial interval under mild assumptions, would require different arguments and a different threshold. We leave this for future work.
\end{remark}

\subsection{When the local effect becomes global}

As we just established, a local version of the one-basket theorem always holds near the origin. This is no coincidence: the two phenomena are directly connected. In fact, the one-basket theorem characterizes exactly the setting in which that local effect extends globally.\\
\\
This connection becomes clear upon examining Proposition~\ref{proposition:zerot} in the special case where $t(\boldsymbol{\theta}) = \infty$. Then, the proposition guarantees that inequality~\eqref{ineq:prop} holds for all $x \geq 0$, yielding first-order stochastic dominance:
\begin{equation}\notag
I_1 X_1 + \dots + I_n X_n \leq_{\text{st}} \theta_1 X_1 + \dots + \theta_n X_n.
\end{equation}
By definition, the threshold $t(\boldsymbol{\theta})$ is infinite if and only if the condition
\begin{equation}\notag
\theta_{\mu} \, \overline{F}_i(x) \leq \overline{F}_i(x/\theta_{\mu}) \qquad \text{for all } x \geq 0
\end{equation}
is satisfied for each $i \in [n]$ and for every $\mu$ with $\lbrace i \rbrace \subseteq \mu \subsetneq [n]$. These are exactly the conditions and the conclusion of the one-basket theorem.\\
\\
The theorem thus appears as a special instance of the proposition, confirming the connection between the local and global phenomena. Seen through this lens, the theorem emerges not as an anomaly, but as the boundary case of a broader effect: diversification always increases the likelihood of exceeding small thresholds, and while this is typically confined to a limited region near the origin, under specific conditions it extends globally, giving rise to first-order stochastic dominance.

\section{Conclusion}\label{conclusion}

We showed that, for extremely heavy-tailed risks, diversification can reverse in the strongest possible sense: the diversified portfolio may have larger exceedance probabilities at every threshold than the corresponding one-basket benchmark. Our main result --- the one-basket theorem --- provides simple, weight-specific sufficient conditions for this first-order stochastic dominance.\\
\\
To interpret these conditions, we introduced subscalability and the associated notions of $\theta$-subscalability and complete subscalability. These classes offer a transparent mechanism behind the dominance phenomenon and connect our setting to several distributional families previously identified in the literature. Importantly, the weight-specific nature of the theorem is particularly useful in discrete settings, where uniform results typically fail. It allowed us to obtain new dominance results for equal-weighted averages of infinite-mean discrete Pareto risks, and it yields an alternative proof of the stochastic dominance ordering for averages of St.\ Petersburg lotteries.\\
\\
Finally, we placed the one-basket theorem within a broader picture: diversification always increases the likelihood of exceeding sufficiently small thresholds, and the theorem pinpoints when this local effect propagates to all thresholds.

\appendix
\section*{Appendix}
\addcontentsline{toc}{section}{Appendix}

\section{Proof of Theorem~\ref{theorem:lowerbound}}\label{appendix:theorem}
Before turning to the main proof, we introduce some notation and establish a lemma that will be used throughout.\\
\\
Let $S \subseteq [n]$ be a non-empty index subset. For each $\mu \subseteq S$ define
\begin{equation}\notag
A_\mu(u) := \bigcap_{i \in \mu} \lbrace X_i > u \rbrace, \quad u \in \mathbb{R},
\end{equation}
which corresponds to the event that all variables with indices in $\mu$ exceed some threshold $u$. Note that $A_{\mu}(u)$ satisfies the monotonicity property
\begin{equation}\notag
A_{\mu}(u_1) \subseteq A_{\mu}(u_2), \qquad \text{for all } u_1 \geq u_2.
\end{equation}
\medskip
\noindent Next, let
\begin{equation}\notag
\mathbf{u} := (u_{\lambda} : \emptyset \subsetneq \lambda \subseteq S)
\end{equation}
be a vector of real thresholds indexed by the nonempty subsets of $S$, and define for each $\mu$ with $\emptyset \subsetneq \mu \subseteq S$ the event
\begin{equation}\notag
B_{\mu}(\mathbf{u}) := A_{\mu}(u_{\mu}) \cap \bigcap_{\mu \subsetneq \lambda \subseteq S} A_{\lambda}(u_{\lambda})^c.
\end{equation}
Here $E^c$ denotes the complement of an event $E$. Thus $B_{\mu}(\mathbf{u})$ occurs precisely when the variables indexed by $\mu$ exceed the threshold $u_{\mu}$, while no strict superset $\lambda \supsetneq \mu$ within $S$ exceeds its corresponding threshold $u_\lambda$. To complete the definition, we extend it to the case $\mu = \emptyset$,
\begin{equation}\notag
B_{\emptyset}(\mathbf{u}) := \bigcap_{\emptyset \subsetneq \lambda \subseteq S} A_{\lambda}(u_{\lambda})^c.
\end{equation}
\noindent The following lemma establishes that the events $B_{\mu}(\mathbf{u})$ partition the sample space under a monotonicity condition on the thresholds. This result is central to applying the law of total probability in the main argument.

\begin{lemma}\label{lemma:partition}
Let $S \subseteq [n]$, and suppose the thresholds satisfy the monotonicity condition
\[
u_{\mu} \geq u_{\lambda} \qquad \text{for all } \emptyset \subsetneq \mu \subsetneq \lambda \subseteq S.
\]
Then the collection $\lbrace B_{\mu}(\mathbf{u}) : \mu \subseteq S \rbrace$ forms a partition of the sample space $\Omega$.
\end{lemma}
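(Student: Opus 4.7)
The plan is to show that for every $\omega \in \Omega$ there is exactly one $\mu \subseteq \mathcal{S}$ such that $\omega \in B_\mu(\mathbf{u})$. Both disjointness and covering will follow once we produce, for each outcome, a canonical maximal subset of ``exceeding'' indices. Fix $\omega \in \Omega$ and define
\begin{equation}\notag
\mathcal{L}(\omega) := \bigl\{ \lambda \subseteq \mathcal{S} : \omega \in A_\lambda(u_\lambda) \bigr\}.
\end{equation}
Using the convention that an empty intersection equals $\Omega$, we have $\emptyset \in \mathcal{L}(\omega)$, so this family is nonempty. Since $\mathcal{S}$ is finite, so is $\mathcal{L}(\omega)$.

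The key step is to show that $\mathcal{L}(\omega)$ is closed under finite unions. Suppose $\lambda_1, \lambda_2 \in \mathcal{L}(\omega)$ and set $\lambda := \lambda_1 \cup \lambda_2$. By the threshold hypothesis, $u_\lambda \leq u_{\lambda_1}$ and $u_\lambda \leq u_{\lambda_2}$ (with equality whenever $\lambda$ already coincides with one of them). For any $i \in \lambda_1$, membership in $A_{\lambda_1}(u_{\lambda_1})$ gives $X_i(\omega) > u_{\lambda_1} \geq u_\lambda$; the symmetric argument applies for $i \in \lambda_2$. Hence $\omega \in A_\lambda(u_\lambda)$, i.e.\ $\lambda \in \mathcal{L}(\omega)$. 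Being a finite, union-closed, nonempty family, $\mathcal{L}(\omega)$ admits a unique maximum element
\begin{equation}\notag
\mu^\star(\omega) := \bigcup_{\lambda \in \mathcal{L}(\omega)} \lambda.
\end{equation}
This union-closedness is the only substantive point in the proof; once the threshold monotonicity is available, it drops out directly.

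To conclude, I unpack the definition of $B_\mu(\mathbf{u})$: membership $\omega \in B_\mu(\mathbf{u})$ is equivalent to $\mu \in \mathcal{L}(\omega)$ together with $\lambda \notin \mathcal{L}(\omega)$ for every $\mu \subsetneq \lambda \subseteq \mathcal{S}$. In other words, $\omega \in B_\mu(\mathbf{u})$ if and only if $\mu$ is a maximal element of $\mathcal{L}(\omega)$ among subsets of $\mathcal{S}$. Since $\mu^\star(\omega)$ is the unique such maximal element, there is exactly one $\mu$ with $\omega \in B_\mu(\mathbf{u})$, namely $\mu = \mu^\star(\omega)$. This simultaneously yields pairwise disjointness of the $B_\mu(\mathbf{u})$ and the covering property $\bigcup_{\mu \subseteq \mathcal{S}} B_\mu(\mathbf{u}) = \Omega$, completing the proof.
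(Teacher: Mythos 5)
Your proof is correct. The covering half is essentially the paper's: the paper also defines, for each outcome, the union $\mu^*(\omega)$ of all subsets $\mu$ with $\omega\in A_\mu(u_\mu)$ and shows $\omega\in B_{\mu^*}(\mathbf{u})$. Where you genuinely depart is in how disjointness is obtained. The paper proves it separately, invoking Lemma~\ref{lemma:rewritting} to rewrite $B_{\mu_1}(\mathbf{u})$ and then combining the resulting inclusion with the monotonicity of $A_\mu(\cdot)$ in the threshold. You instead make explicit that the family $\mathcal{L}(\omega)$ is closed under unions (which is exactly where the hypothesis $u_\mu\geq u_\lambda$ for $\mu\subset\lambda$ enters), so that it has a unique maximum $\mu^\star(\omega)$; since $\omega\in B_\mu(\mathbf{u})$ is by definition the statement that $\mu$ is a maximal element of $\mathcal{L}(\omega)$, you get ``exactly one $\mu$'' in a single stroke, and disjointness and covering fall out together. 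This buys a shorter, self-contained argument that does not rely on Lemma~\ref{lemma:rewritting} at all (that lemma is still needed elsewhere, in the factorization step of the main proof, so it is not made redundant). Two harmless notational points: $u_\emptyset$ is not a component of $\mathbf{u}$, so strictly speaking $A_\emptyset(u_\emptyset)$ should just be read as $\Omega$, as you note via the empty-intersection convention; and in the union-closedness step you correctly flag that $u_{\lambda_1\cup\lambda_2}\leq u_{\lambda_1}$ degenerates to an equality when $\lambda_2\subseteq\lambda_1$, which keeps the argument valid in the nested case where the strict-inclusion hypothesis does not directly apply.
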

\begin{proof}
We show that $\{B_\mu(\mathbf u):\mu\subseteq S \}$ is a partition of $\Omega$ by proving (i) exhaustiveness and (ii) pairwise disjointness.\\
\\
\emph{(i) Exhaustiveness.}
Fix $\omega\in\Omega$ and define
\[
M(\omega)
:=\{\mu:\ \emptyset\subsetneq\mu\subseteq S \ \text{and}\ \omega\in A_\mu(u_\mu)\},
\qquad
\mu^*(\omega):=\bigcup_{\mu\in M(\omega)}\mu.
\]

\smallskip
\noindent\emph{Case 1: $M(\omega)=\emptyset$.}
Then $\omega\notin A_\lambda(u_\lambda)$ for every $\emptyset\subsetneq\lambda\subseteq S$, hence $\omega\in B_\emptyset(\mathbf u)$.

\smallskip
\noindent\emph{Case 2: $M(\omega)\neq\emptyset$.}
We claim that $\omega\in B_{\mu^*(\omega)}(\mathbf u)$.

\smallskip
\noindent First, we show that $\omega\in A_{\mu^*(\omega)}(u_{\mu^*(\omega)})$.
Let $i\in\mu^*(\omega)$. By definition of $\mu^*(\omega)$, there exists $\mu_i\in M(\omega)$ such that $i\in\mu_i$.
Since $\omega\in A_{\mu_i}(u_{\mu_i})$, we have $\omega\in A_{\{i\}}(u_{\mu_i})$.
Moreover, because $\mu_i\subseteq \mu^*(\omega)$, the threshold assumption yields $u_{\mu_i}\ge u_{\mu^*(\omega)}$, and thus by the monotonicity of $A_{\{i\}}(u)$ in $u$, 
\[
\omega\in A_{\{i\}}(u_{\mu_i})\subseteq A_{\{i\}}(u_{\mu^*(\omega)}).
\]
Since this holds for all $i\in\mu^*(\omega)$, intersecting over $i$ gives
\[
\omega\in \bigcap_{i\in\mu^*(\omega)} A_{\{i\}}(u_{\mu^*(\omega)})
= A_{\mu^*(\omega)}(u_{\mu^*(\omega)}).
\]

\smallskip
\noindent Next, we show that $\omega\notin A_{\lambda}(u_{\lambda})$ for every $\lambda\supsetneq \mu^*(\omega)$.
Indeed, if $\omega\in A_{\lambda}(u_{\lambda})$ for some $\lambda\supsetneq \mu^*(\omega)$, then $\lambda\in M(\omega)$, hence
$\lambda\subseteq \bigcup_{\mu\in M(\omega)}\mu=\mu^*(\omega)$, a contradiction.

\smallskip
\noindent Thus $\omega\in A_{\mu^*(\omega)}(u_{\mu^*(\omega)})$ and $\omega\notin A_{\lambda}(u_{\lambda})$ for all $\lambda\supsetneq\mu^*(\omega)$,
which means $\omega\in B_{\mu^*(\omega)}(\mathbf u)$.

\smallskip
\noindent In both cases, $\omega$ belongs to some $B_\mu(\mathbf u)$, so
\[
\bigcup_{\mu\subseteq S}B_\mu(\mathbf u)=\Omega.
\]

\medskip
\noindent\emph{(ii) Pairwise disjointness.}
Fix $\mu_1\neq\mu_2$.

\smallskip
\noindent\emph{Case 1: one of $\mu_1,\mu_2$ is empty.}
Assume without loss of generality that $\mu_1=\emptyset$ and $\mu_2\neq\emptyset$.
Then $B_{\mu_2}(\mathbf u)\subseteq A_{\mu_2}(u_{\mu_2})$, while
$B_{\emptyset}(\mathbf u)\subseteq A_{\mu_2}(u_{\mu_2})^c$ by definition of $B_{\emptyset}(\mathbf u)$.
Hence $B_{\mu_1}(\mathbf u)\cap B_{\mu_2}(\mathbf u)=\emptyset$.

\smallskip
\noindent\emph{Case 2: $\emptyset\subsetneq\mu_1,\mu_2\subseteq S$.}
Suppose for contradiction that there exists $\omega\in B_{\mu_1}(\mathbf u)\cap B_{\mu_2}(\mathbf u)$.
Then $\omega\in A_{\mu_1}(u_{\mu_1})\cap A_{\mu_2}(u_{\mu_2})$. Set $\tau:=\mu_1\cup\mu_2$.
By the threshold assumption (since $\mu_1\subseteq\tau$ and $\mu_2\subseteq\tau$) we have $u_{\mu_1}\ge u_\tau$ and $u_{\mu_2}\ge u_\tau$,
hence $\omega\in A_\tau(u_\tau)$. Since $\mu_1\neq\mu_2$, we have $\tau\supsetneq\mu_1$ or $\tau\supsetneq\mu_2$, which contradicts the defining
requirement in $B_{\mu_1}(\mathbf u)$ or $B_{\mu_2}(\mathbf u)$ that no strict superset exceed its threshold.
Therefore $B_{\mu_1}(\mathbf u)\cap B_{\mu_2}(\mathbf u)=\emptyset$.

\smallskip
\noindent
In both cases the intersection is empty, hence the events $\{B_\mu(\mathbf u):\mu\subseteq S \}$ are pairwise disjoint.
\end{proof}

\noindent With this in place, we are ready to prove the theorem. For convenience, we restate it below.

\lowerbound*
\begin{proof}
The argument proceeds by applying the law of total probability, exploiting the partition structure, and using independence.\\
\\
Fix $x \in \mathcal{R}(\boldsymbol{\theta})$. Define
\begin{equation}\notag
u_{\mu}  := x/\theta_{\mu},
\end{equation}
and let
\begin{equation}\notag
\mathbf{u} = (u_{\mu} : \emptyset \subsetneq \mu \subseteq [n])
\end{equation}
be the resulting threshold vector, as in the definition of $B_{\mu}(\mathbf{u})$. Since $\boldsymbol{\theta} \in \Delta_n$, we have $0 < \theta_{\mu} \leq \theta_{\lambda}$ for all $\emptyset \subsetneq \mu \subsetneq \lambda \subseteq [n]$, and as $x \in \mathcal{R}(\boldsymbol{\theta})$ implies $x \geq 0$, the thresholds are well defined and satisfy the monotonicity condition
\begin{equation}\notag
u_{\mu} \geq u_{\lambda}.
\end{equation}
Thus, Lemma~\ref{lemma:partition} applies, and the events $\lbrace B_{\mu}(\mathbf{u}) : \mu \subseteq [n] \rbrace$ partition the sample space $\Omega$.\\
\\
Applying the law of total probability, we obtain
\begin{equation}\notag
\prob\Bigg(\sum_{i = 1}^n \theta_i X_i > x \Bigg) 
= \sum_{\mu \subseteq [n]} \prob\Bigg( \sum_{i = 1}^n \theta_i X_i > x, B_{\mu}(\mathbf{u})\Bigg),
\end{equation}
and dropping the term corresponding to $\mu = \emptyset$ thus yields the inequality
\begin{equation}\notag
\prob\Bigg(\sum_{i = 1}^n \theta_i X_i > x \Bigg) 
\geq \sum_{\emptyset \subsetneq \mu \subseteq [n]} \prob\Bigg( \sum_{i = 1}^n \theta_i X_i > x, B_{\mu}(\mathbf{u})\Bigg).
\end{equation}
Because each $\theta_i$ and each $X_i$ are positive, the weighted sum $\sum_{i \in \mu} \theta_i X_i$ can only increase when we add terms. Together with the given definitions, this results in the chain of inclusions 
\begin{equation}\notag
B_{\mu}(\mathbf{u}) 
\subseteq
A_{\mu}(u_{\mu})
\subseteq
\Bigg\lbrace \sum_{i \in \mu} \theta_i X_i > x \Bigg\rbrace
\subseteq
\Bigg\lbrace \sum_{i = 1}^n \theta_i X_i > x \Bigg\rbrace.
\end{equation}
Therefore $B_{\mu}(\mathbf{u}) \subseteq \lbrace \sum_{i = 1}^n \theta_i X_i > x \rbrace$, and we have
\begin{equation}\notag
\prob\Bigg( \sum_{i = 1}^n \theta_i X_i > x, B_{\mu}(\mathbf{u})\Bigg) = \prob(B_{\mu}(\mathbf{u})),
\end{equation}
which simplifies the inequality to
\begin{equation}\notag
\prob\Bigg(\sum_{i = 1}^n \theta_i X_i > x \Bigg) 
\geq \sum_{\emptyset \subsetneq \mu \subseteq [n]} \prob( B_{\mu}(\mathbf{u})).
\end{equation}
Noting $\sum_{i \in \mu} \frac{\theta_i}{\theta_{\mu}} = 1$ for each nonempty $\mu$, we may insert this identity into each summand, as
\begin{equation}\notag
\prob\Bigg(\sum_{i = 1}^n \theta_i X_i > x \Bigg) 
\geq
\sum_{\emptyset \subsetneq \mu \subseteq [n]} \sum_{i \in \mu} \frac{\theta_i}{\theta_{\mu}} \, \prob( B_{\mu}(\mathbf{u})).
\end{equation}
By changing the order of summation and appropriately redefining the indices, we obtain
\begin{equation}\notag
\prob\Bigg(\sum_{i = 1}^n \theta_i X_i > x \Bigg) 
\geq
\sum_{i=1}^n \sum_{\mu \subseteq [n] \setminus \lbrace i \rbrace} \frac{\theta_i}{\theta_{\mu \cup \lbrace i \rbrace}} \, \prob\!\big(B_{\mu \cup \lbrace i \rbrace}(\mathbf{u})\big).
\end{equation}
Now, fix $i\in[n]$ and $\mu \subseteq [n]\setminus\lbrace i\rbrace$. For $\emptyset\subsetneq\lambda\subseteq[n]\setminus\lbrace i\rbrace$, define the
thresholds $u^{(i)}_\lambda:=u_{\lambda\cup\lbrace i\rbrace}$ and set
\begin{equation}\notag
\mathbf u^{(i)}:=\big(u^{(i)}_\lambda:\ \emptyset\subsetneq\lambda\subseteq[n]\setminus\lbrace i\rbrace\big).
\end{equation}
Using the given definitions, we can then write
\begin{equation}\notag
\begin{aligned}
B_{\mu \cup \lbrace i \rbrace}(\mathbf{u})
&= A_{\mu\cup\lbrace i\rbrace}(u_{\mu\cup\lbrace i\rbrace})
   \cap \!\!\!\bigcap_{\mu\cup\lbrace i\rbrace \subsetneq \lambda \subseteq [n]} A_{\lambda}(u_{\lambda})^c \\
&= \lbrace X_i > u_{\mu\cup\lbrace i\rbrace}\rbrace \cap A_{\mu}(u_{\mu\cup\lbrace i\rbrace})
   \cap \!\!\!\bigcap_{\mu\subsetneq \lambda \subseteq [n] \setminus \lbrace i \rbrace} A_{\lambda}(u_{\lambda \cup \lbrace i \rbrace})^c \\
&= \lbrace X_i > x/\theta_{\mu \cup \lbrace i \rbrace} \rbrace \cap B_{\mu}(\mathbf{u}^{(i)}).
\end{aligned}
\end{equation}
Since $X_i$ and $X_j$ are independent whenever $i \neq j$ and $B_{\mu}(\mathbf{u}^{(i)})$ does not involve $X_i$ for all  $\mu \subseteq [n] \setminus \lbrace i \rbrace$, the events $\lbrace X_i > x / \theta_{\mu \cup \lbrace i \rbrace} \rbrace$ and $B_{\mu}(\mathbf{u}^{(i)})$ are independent. Consequently,
\begin{equation}\notag
\prob\!\big(B_{\mu \cup \lbrace i \rbrace}(\mathbf{u})\big)
= \overline{F}_i(x/\theta_{\mu \cup \lbrace i \rbrace}) \, \prob\!\big(B_{\mu}(\mathbf{u}^{(i)})\big).
\end{equation}
If $\mu \cup \lbrace i \rbrace \subsetneq [n]$, then $x \in \mathcal{R}(\boldsymbol{\theta})$ implies
\begin{equation}\notag
\overline{F}_i(x/\theta_{\mu \cup \lbrace i \rbrace}) \geq \theta_{\mu \cup \lbrace i \rbrace}  \overline{F}_i(x).
\end{equation}
If $\mu \cup \lbrace i \rbrace = [n]$, this inequality holds trivially (with equality), because $\theta_{[n]} = 1$. Hence, in all cases, 
\begin{equation}\notag
\prob(B_{\mu \cup \lbrace i \rbrace}(\mathbf{u})) 
\geq \theta_{\mu \cup \lbrace i \rbrace} \, \overline{F}_i(x) \, \prob(B_{\mu}(\mathbf{u}^{(i)})).
\end{equation}
As this lower bound on $\prob(B_{\mu \cup \lbrace i \rbrace}(\mathbf{u}))$ holds true for any $i \in [n]$ and $\mu \subseteq [n] \setminus \lbrace i \rbrace$, we can plug it into the preceding double-sum inequality, and we obtain
\begin{equation}\notag
\prob\Bigg(\sum_{i = 1}^n \theta_i X_i > x \Bigg) 
\geq 
\sum_{i=1}^n \theta_i \, \overline{F}_i(x) \sum_{\mu \subseteq [n] \setminus \lbrace i \rbrace}  \prob(B_{\mu}(\mathbf{u}^{(i)})).
\end{equation}
From the properties of the thresholds in $\mathbf{u}$ that we established earlier, it follows that the elements of $\mathbf{u}^{(i)}$ are also well-defined and satisfy the monotonicity condition. Lemma \ref{lemma:partition} therefore applies once more, meaning the events $\lbrace B_{\mu}(\mathbf{u}^{(i)}) : \mu \subseteq [n] \setminus \lbrace i \rbrace \rbrace$ partition the sample space $\Omega$, and we have
\begin{equation}\notag
\sum_{\mu \subseteq [n] \setminus \lbrace i \rbrace}  \prob(B_{\mu}(\mathbf{u}^{(i)})) = 1.
\end{equation}
This simplifies further the inequality to 
\begin{equation}\notag
\prob\Bigg(\sum_{i \in [n]} \theta_i X_i > x \Bigg) 
\geq \sum_{i = 1}^n \theta_i \overline{F}_i(x),
\end{equation}
which completes the proof.
\end{proof}

\section{Proof of Corollary~\ref{cor:onebasket-equality}}
\label{appendix:cor:onebasket-equality}
For convenience, we restate the corollary before proving it.

\onebasketequality*

\begin{proof}
For ease of notation, all equalities and inequalities between random variables in this proof are understood in the almost sure sense. Using the notation introduced in Section~\ref{section:convexorder}, we write
\[
P_C=\sum_{i=1}^n I_iX_i,
\qquad
P_D=\sum_{i=1}^n \theta_iX_i,
\qquad
\mathbf X=(X_1,\dots,X_n).
\]
If \(X_1=\cdots=X_n=0\), then clearly \(P_C=P_D=0\), so equality in~\eqref{eq:onebasket:dominance} holds. It remains to prove the converse.\\
\\
Assume therefore that \(P_C =_{\mathrm{st}} P_D\). Then \(P_C\sim P_D\). Consider the strictly convex function $x \mapsto e^{-x}$, and define
\[
Z:=\mathbb E[e^{-P_C}\mid \mathbf X]-e^{-\mathbb E[P_C\mid \mathbf X]},
\]
where $\mathbb{E}[P_C \mid \mathbf{X}] = P_D$. By Jensen's inequality,
\[
Z\ge 0.
\]
Moreover, since \(P_C\sim P_D\), we have $\mathbb E[e^{-P_C}]=\mathbb E[e^{-P_D}]$, and the tower property gives
\[
\mathbb E[Z]
=
\mathbb E\!\left[\mathbb E[e^{-P_C}\mid \mathbf X]\right]
-
\mathbb E\!\left[e^{-\mathbb E[P_C\mid \mathbf X]}\right]
=
\mathbb E[e^{-P_C}]-\mathbb E[e^{-P_D}]
=
0.
\]
Hence \(Z=0\), that is,
\[
e^{-\sum_{i=1}^n \theta_iX_i}
=
\sum_{i=1}^n \theta_i e^{-X_i}.
\]
Since \(x\mapsto e^{-x}\) is strictly convex and all weights \(\theta_i\) are strictly positive, this forces $X_1=\cdots=X_n$. Furthermore, since equal random variables that are independent must be constant, and since each $X_i$ is positive, there must exist a constant $c \geq 0$ such that
\begin{equation}\notag
X_1 = \cdots = X_n = c.
\end{equation}
It remains to show that \(c=0\). Suppose for contradiction that \(c>0\). Then for any \(i\in[n]\) and any \(x\in[\theta_i c,c)\),
\[
\theta_i\,\overline F_i(x)=\theta_i>0=\overline F_i(x/\theta_i),
\]
which contradicts condition~\eqref{eq:onebasket:condition} in Theorem~\ref{theorem:onebasket}, applied with \(\mu=\{i\}\). Hence \(c=0\), meaning that equality in~\eqref{eq:onebasket:dominance} holds if and only if
\[
X_1=\cdots=X_n=0 \qquad \text{almost surely}.
\]
The strict inequality statement then follows immediately from Theorem~\ref{theorem:onebasket} and the equality characterization above.
\end{proof}

\section{Proof of Proposition~\ref{prop:discpareto} (Discrete Pareto)}
\label{appendix:discrete}

This appendix proves Proposition~\ref{prop:discpareto}. The discrete Pareto distribution in the proposition
is supported on $\{1,2,\dots\}$ and has survival function
\begin{equation}\tag{\ref{eq:dpar}}
\overline F(x)= 
\begin{cases}
1, & x < 1, \\
\frac{1}{\lfloor x \rfloor + 1}, & x \ge 1.
\end{cases}
\end{equation}
It is convenient to work with the shifted variable $Y:=X-1$, which is supported on $\{0,1,2,\dots\}$, and has survival function
\begin{equation}\label{eq:disc:par}
\overline G(x)=
\begin{cases}
1, & x<0,\\
\frac{1}{\lfloor x\rfloor+2}, & x\ge 0.
\end{cases}
\end{equation}
We first identify the set of $\theta$ for which $Y$ is $\theta$-subscalable.

\begin{lemma}\label{lemma:discreteParetoRegion}
Let $Y$ be a random variable with survival function $\overline{G}$ defined in~\eqref{eq:disc:par}. Then $Y$ is $\theta$-subscalable if and only if $\theta \in \mathcal{A}$, where
\begin{equation}\notag
\mathcal{A} := \left(0, \tfrac{1}{2} \right] \cup \left\{ \tfrac{k+1}{2k+1} : k \in \mathbb{N} \right\}.
\end{equation}
\end{lemma}
\begin{proof}
First, observe that at $x = 0$, the inequality becomes $\theta \, \overline{G}(0) \leq \overline{G}(0)$, which trivially holds. Therefore, let $x > 0$ and define
\[
\ell := \lfloor x \rfloor + 1, \qquad m := \left\lfloor x/\theta \right \rfloor + 1.
\]
Since $\overline G(x)=1/(\lfloor x\rfloor+2)=1/(\ell+1)$ for $x\ge 0$, the inequality rewrites as
\[
\theta\frac{1}{\ell+1}\le \frac{1}{m+1}
\quad \Longleftrightarrow \quad
\theta (m+1)\le \ell+1.
\]
Fix $m\in\mathbb N$ and consider $x$ such that $m=\lfloor x/\theta\rfloor+1$, i.e. $x\in[(m-1)\theta,m\theta)$.
On this interval, $\ell=\lfloor x\rfloor+1$ is minimized at $x=(m-1)\theta$, giving $\ell=\lfloor(m-1)\theta\rfloor+1$.
Thus it suffices to check the inequality at that point, which yields
\begin{equation}\notag
\theta(m+1)\le \lfloor(m-1)\theta\rfloor+2
\quad \Longleftrightarrow \quad
\operatorname{frac}((m-1)\theta)\le 2(1-\theta),
\end{equation}
where $\operatorname{frac}(z)=z-\lfloor z\rfloor$.\\
\\
We now check this condition in three distinct cases.

\medskip
\noindent\emph{Case 1: $\theta\in(0,\tfrac12]$.}
Then $2(1-\theta)\ge 1$ and $\operatorname{frac}(\cdot)<1$, so the condition holds for all $m$.

\medskip
\noindent\emph{Case 2: $\theta \in (\tfrac{1}{2},1)$ irrational.}  
By Kronecker's theorem (see, e.g., Theorem 12.2.2 in \citep{miller2006}),
the set $\{ \operatorname{frac}((m - 1)\theta) : m \in \mathbb{N} \}$ is dense in $[0,1)$. Since $2(1 - \theta) < 1$, the condition fails for some $m$.

\medskip
\noindent\emph{Case 3: $\theta\in(\tfrac12,1)$ rational.}
Write $\theta=a/b$ with coprime $a<b$ and $2a>b$. Then $\operatorname{frac}((m-1)\theta)$ ranges over $\{0,1/b,\dots,(b-1)/b\}$,
so the condition for all $m$ requires $(b-1)/b\le 2(1-a/b)$, i.e. $2a\le b+1$. Together with $2a>b$ this forces $2a=b+1$,
hence $b=2k+1$ and $a=k+1$, i.e. $\theta=(k+1)/(2k+1)$.\\
\\
This proves the characterization of $\mathcal A$.
\end{proof}

\medskip
\noindent We now restate Proposition~\ref{prop:discpareto} and prove it.

\discpareto*

\begin{proof}
Let $X$ be as in the proposition and set $Y:=X-1$. Let $Y_1,\dots,Y_n$ be iid copies of $Y$, and write
\[
\overline Y_n:=\frac1n\sum_{i=1}^n Y_i.
\]
We first show that
\begin{equation}\label{eq:Y-dominance}
Y \le_{\mathrm{st}} \overline Y_n \qquad \text{for all } n\ge 2.
\end{equation}

\smallskip
\noindent\emph{Base cases.}
For $n=2$, the only relevant subset weight is $1/2\in\mathcal A$, so the one-basket theorem applies and yields
$Y\le_{\mathrm{st}}\overline Y_2$.
For $n=3$, the relevant subset weights are $1/3$ and $2/3$, and both belong to $\mathcal A$, so the one-basket theorem yields $Y\le_{\mathrm{st}}\overline Y_3$.

\smallskip
\noindent\emph{Induction step.}
Fix $n\ge 4$ and assume~\eqref{eq:Y-dominance} holds for all $j=2,\dots,n-1$.
Let $m:=\lfloor n/2\rfloor$ and define two independent block averages
\[
U_1:=\frac1m\sum_{i=1}^m Y_i,
\qquad
U_2:=\frac{1}{n-m}\sum_{i=m+1}^n Y_i.
\]
Then $U_1\sim \overline Y_m$ and $U_2\sim \overline Y_{n-m}$, and by the induction hypothesis
$Y\le_{\mathrm{st}} U_1$ and $Y\le_{\mathrm{st}} U_2$.
Let $Y_1',Y_2'$ be iid copies of $Y$, independent of $(U_1,U_2)$, and set
\[
\theta_1:=\frac{m}{n},\qquad \theta_2:=\frac{n-m}{n}.
\]
By Lemma~\ref{lemma:fosd} (closure under scaling and convolution),
\[
\theta_1 Y_1' + \theta_2 Y_2' \le_{\mathrm{st}} \theta_1 U_1 + \theta_2 U_2 = \overline Y_n.
\]
Moreover, $\theta_1,\theta_2\in\mathcal A$: if $n$ is even, $\theta_1=\theta_2=1/2$; if $n$ is odd, writing $n=2k+1$ gives
$\theta_1=k/(2k+1)$ and $\theta_2=(k+1)/(2k+1)$, and both lie in $\mathcal A$.
Therefore the one-basket theorem applies to the two-risk iid vector $(Y_1',Y_2')$ with weights $(\theta_1,\theta_2)$ and yields
\[
Y \le_{\mathrm{st}} \theta_1 Y_1' + \theta_2 Y_2'.
\]
Combining the last two inequalities gives $Y\le_{\mathrm{st}}\overline Y_n$, completing the induction and proving~\eqref{eq:Y-dominance} for all $n\ge 2$.

\medskip
\noindent\emph{Conclusion for the unshifted variable $X$.}
Since $X=Y+1$ and, for iid copies, $\overline X_n=\overline Y_n+1$, apply Lemma~\ref{lemma:icx-transfer} with the increasing convex
function $f(x)=x+1$ to~\eqref{eq:Y-dominance} to obtain
\[
X=f(Y)\le_{\mathrm{st}} \frac1n\sum_{i=1}^n f(Y_i)=\frac1n\sum_{i=1}^n (Y_i+1)=\overline Y_n+1=\overline X_n.
\]
This is exactly the claim of Proposition~\ref{prop:discpareto}.
\end{proof}

\section*{Acknowledgements}

The author thanks Hansjörg Albrecher for helpful comments on an earlier version of this manuscript.

\bibliographystyle{unsrtnat}
\bibliography{mybibliography}

\end{document}